\def\Underline{\setbox0\hbox\bgroup\let\\\endUnderline}
\def\endUnderline{\vphantom{y}\egroup\smash{\underline{\box0}}\\}
\def\|{\verb|}
\newtheorem{envT}{Theorem}[section]
\newtheorem{envL}[envT]{Lemma}
\newtheorem{envD}[envT]{Definition}
\newtheorem{envC}[envT]{Corollary}
\newtheorem{envP}[envT]{Proposition}
\newcommand{\p}{\partial}
\newcommand{\ep}{\varepsilon}
\newcommand{\R}{\mathbb{R}}
\begin{document}

\title{On R\'{e}nyi Differential Privacy \\in Statistics-Based Synthetic Data Generation}

\author{Takayuki Miura\inst{1} \and
Toshiki Shibahara\inst{1} \and
Masanobu Kii\inst{1} \and 
Atsunori Ichikawa\inst{1} \and
Juko Yamamoto\inst{1} \and 
Koji Chida\inst{2}
}
%
%
\institute{NTT Social Informatics Laboratories, Japan, \email{tkyk.miura@ntt.com} \and Gunma University, Faculty of Informatics, Japan
}


\maketitle

\begin{abstract}
Privacy protection with synthetic data generation often uses differentially private statistics and model parameters to quantitatively express theoretical security.
However, these methods do not take into account privacy protection due to the randomness of data generation.
In this paper, we theoretically evaluate R\'{e}nyi differential privacy of the randomness in data generation of a synthetic data generation method that uses the mean vector and the covariance matrix of an original dataset.
Specifically, for a fixed $\alpha > 1$, we show the condition of $\ep$ such that the synthetic data generation satisfies $(\alpha, \ep)$-R\'{e}nyi differential privacy under a bounded neighboring condition and an unbounded neighboring condition, respectively.
In particular, under the unbounded condition, when the size of the original dataset and synthetic dataset is 10 million, the mechanism satisfies $(4, 0.576)$-R\'{e}nyi differential privacy.
We also show that when we translate it into the traditional $(\ep, \delta)$-differential privacy, the mechanism satisfies $(4.00, 10^{-10})$-differential privacy.

\keywords{
synthetic data generation \and R\'{e}nyi differential privacy \and privacy protection}
\end{abstract}

\section{Introduction}

Personal data is expected to be utilized in various fields such as finance, healthcare, and medicine, but sharing personal data collected by one organization with another organization requires attention to individual privacy.
Traditional anonymization techniques such as $k$-anonymization~\cite{sweeney2002k} and randomized response~\cite{warner1965randomized} have struggled to find a good trade-off between utility and privacy for high-dimensional data~\cite{aggarwal2005k}.
In contrast, a synthetic data generation technique has emerged as a privacy protection method that preserves data utility even for high-dimensional data such as images and tabular data with multi-attributes~\cite{bond2021deep}.
In synthetic data generation, values, which we call {\bf generative parameters}, are extracted from the original raw dataset, and then synthetic data are generated randomly as shown in Fig.~\ref{pic:generative_intro}(a).
The synthetic data are the same format as the original data and statistically similar to them.
Typical generative parameters are statistics of original data and trained parameters of deep neural networks~\cite{sklar1959fonctions,li2014dpsynthesizer,asghar2020differentially,gambs2021growing,zhang2017priv,zhang2021privsyn,mckenna2022aim,goodfellow2014generative,xu2019modeling,DBLP:journals/corr/KingmaW13,rezende2015variational}.
After the synthetic data are generated, they are shared with other organizations, but the generative parameters are typically discarded without being disclosed.

To guarantee privacy protection theoretically, differential privacy~\cite{dwork2006differential} is used as a standard framework.
By adding randomness in generative parameter calculation, the generative parameters become differentially private~\cite{mckenna2021winning, zhang2017priv,abadi2016deep}.
The post-processing property of differential privacy guarantees that synthetic data generated with differentially private generative parameters also satisfy differential privacy as shown in Fig.~\ref{pic:generative_intro}(b).
Although the synthetic data generated with non-differentially private generative parameters have high utility, those with differentially private parameters are known to have lower utility~\cite{tao2021benchmarking}.

\begin{figure}[t]
\begin{center}
\includegraphics[width=0.8\linewidth]{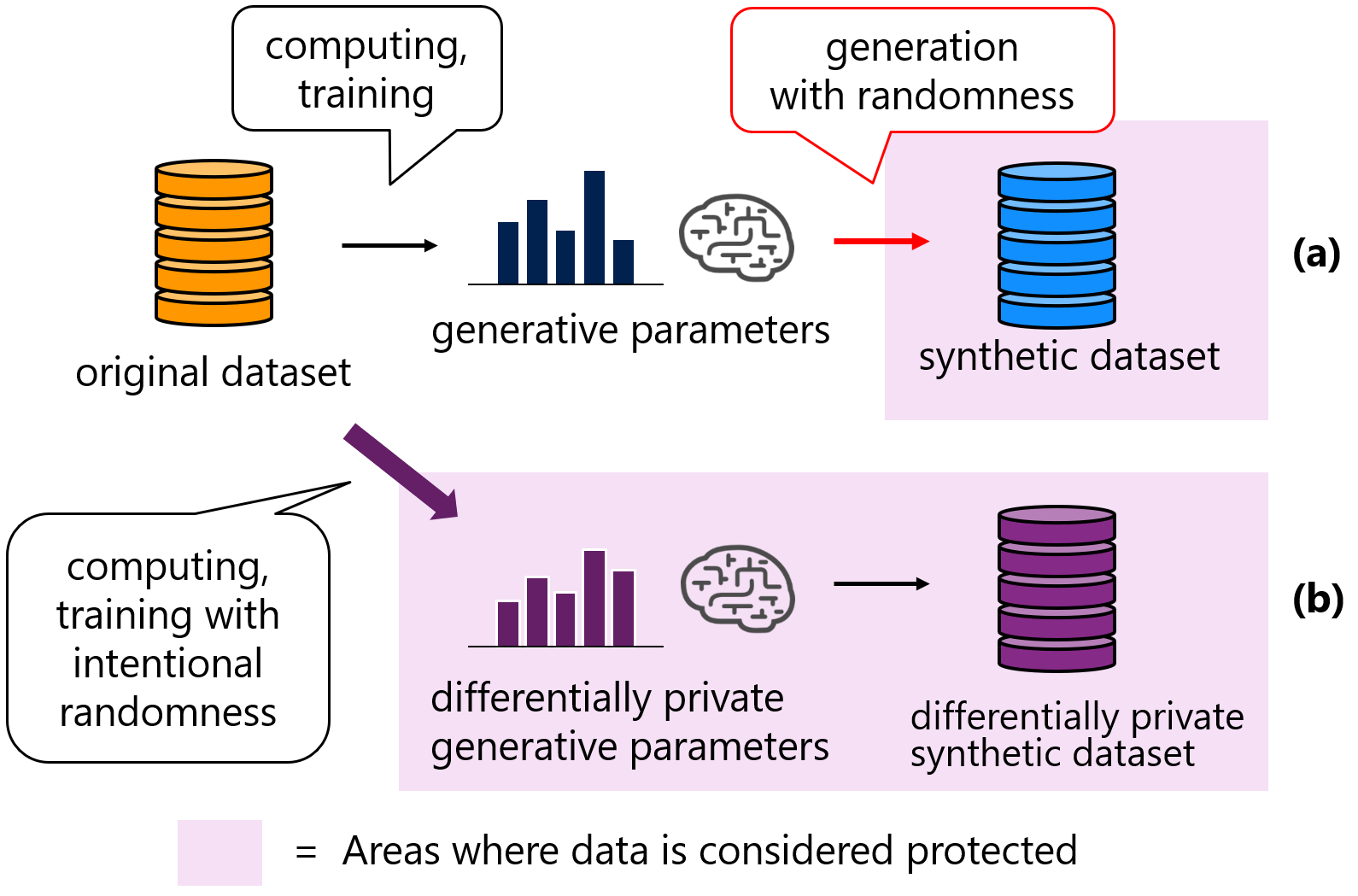}
\caption{(a) Output = Only synthetic data: The generative parameters are discarded after data are generated. We evaluate privacy protection by the randomness in generation.
\\(b) Output = Generative parameters: By computing or training generative parameters with intentional randomness, we obtain differentially private generative parameters that also generate differentially private synthetic data.}
\label{pic:generative_intro}
\end{center}
\end{figure}

We address this problem by evaluating differential privacy of randomness in data generation when using non-differentially private generative parameters.
As mentioned above, in the context of anonymization, the generative parameters are often discarded without disclosing them to the public.
When the output is not generative parameters but only synthetic data, we can consider that it has already been protected by the randomness even if the generative parameters are not protected with differential privacy as shown in Fig.~\ref{pic:generative_intro}(a).
If privacy protection in data generation is quantitatively evaluated, theoretically guaranteed synthetic data can be obtained without degrading the utility.
Moreover, by incorporating this result into traditional methods, we expect to keep the same level of security with smaller additional randomness; that is, we can obtain higher utility synthetic data.

In this paper, we regard a record as a $d$-dimensional vector and focus on a synthetic data generation mechanism with the mean vector and the covariance matrix of the original dataset shown in Fig.~\ref{pic:gen-algo}.
We theoretically evaluate R\'{e}nyi differential privacy~\cite{mironov2017renyi}, which is a relaxed concept of differential privacy, by randomness in generation for the method.
We explicitly derive the condition of $\ep$ such that the synthetic data generation mechanism satisfies $(\alpha, \ep)$-R\'{e}nyi differential privacy for a fixed $\alpha >1$ under the unbounded neighboring condition (Theorem~\ref{thm:main-private}) and the bounded neighboring condition (Corollary~\ref{cor:main-public}).
Furthermore, we conduct a numerical evaluation with reference to the Adult dataset~\cite{Dua:2019} and compute $\ep$ concretely.
We demonstrate that when the size of original dataset is 10 million and the mechanism outputs data the same size as the input dataset, it satisfies $(4, 0.576)$-R\'{e}nyi differential privacy under the unbounded condition and $(4,2.307)$-R\'{e}nyi differential privacy under the bounded condition (Table~\ref{tab:n-table}).
If they are translated into the traditional $(\ep,\delta)$-differential privacy, the mechanism satisfies $(4.00, 10^{-10})$ and $(7.88,10^{-10})$ differential privacy under the unbounded and bounded condition, respectively (Table~\ref{tab:epdelta-table}).
These values are mostly similar to ones used by Apple~\cite{apple} and US Census~\cite{uscensus}.

\section{Preliminaries}

In this section, we introduce basic notations and concepts for later discussion.

\subsection{Notations}
In this paper, we denote the determinant of a square matrix $A \in \R^{d \times d}$ by $|A|:=\det A$.
The transposes of a vector $x \in \R^d$ and a matrix $A \in \R^{d_1 \times d_2}$ are denoted by ${}^t x$ and ${}^t A$.
We assume that datasets are tabular but all discussions can be applied to other datasets such as images since we consider records as vectors.
In a tabular dataset, a record is expressed as a combination of several attribution values.
Each attribution value is a numerical value and normalized into a range $[-1, 1]$.
Thus, a record is regarded as a vector $x \in [-1, 1]^d$, and a dataset with $n$ records is regarded as $D = \{ x_i \}_{i=1, \ldots , n} \in [-1,1]^{d \times n} =: \mathcal{D}$.

\subsection{Differential Privacy}

In this subsection, we introduce $(\ep, \delta)$-differential privacy and $(\alpha, \ep)$-R\'{e}nyi differential privacy.
First, we define neighboring datasets.
\begin{envD}[Neighboring datasets]
Datasets $D, D' \in \mathcal{D}$ are {\bf neighboring datasets} if $D$ and $D'$ are different only in one record.
When datasets have a fixed size $n$, we call the neighboring condition a {\bf bounded condition}~{\rm \cite{kifer2011no}}.
In this case, neighboring means changing the value of exactly one record.
When datasets have no such restriction, we call the neighboring condition an {\bf unbounded condition}~{\rm \cite{kifer2011no}}.
In this case, neighboring means either adding or removing one record.\footnote{This difference is important for the sensitivity of queries. For example, the sensitivity of the mean value query under the bounded condition is twice as large as that under the unbounded condition.}
\end{envD}

$(\ep, \delta )$-differential privacy~\cite{dwork2006differential} is defined as follows.
\begin{envD}[differential privacy~\cite{dwork2006differential}]
A randomized function $\mathcal{M} : \mathcal{D} \to \mathcal{Y}$ satisfies $(\ep, \delta)$-differential privacy ($(\ep, \delta)$-DP) if for any neighboring $D, D' \in \mathcal{D}$ and $S \subset \mathcal{Y}$
\[
\Pr[\mathcal{M}(D) \in S] \le e^{\ep} \Pr[\mathcal{M}(D') \in S] + \delta.
\]
In particular, $\mathcal{M}$ satisfies $\ep$-DP if it satisfies $(\ep, 0)$-DP.
\end{envD}

Next, we define R\'{e}nyi divergence, which is necessary to define R\'{e}nyi differential privacy.

\begin{envD}[R\'{e}nyi Divergence]
Let $P, Q$ be probability distributions on $\R^d$.
For $\alpha > 1$, the {\bf R\'{e}nyi Divergence} of order $\alpha$ is 
\[
D_{\alpha}(P||Q) := \frac{1}{\alpha -1} \log \left( \int_{\R^d} P(x)^{\alpha}Q(x)^{1-\alpha} dx \right).
\]
\end{envD}

\begin{envD}[R\'{e}nyi differential privacy~\cite{mironov2017renyi}]
For $\alpha > 1$ and $\ep > 0$, a randomized function $\mathcal{M}: \mathcal{D} \to \R^d$ satisfies $(\alpha, \ep)$-{\bf R\'{e}nyi differential privacy} ($(\alpha, \ep)$-RDP) if for neighboring datasets $D, D' \in \mathcal{D}$, 
\[
D_{\alpha}(\mathcal{M}(D)||\mathcal{M}(D')) \le \ep .
\]
\end{envD}

The smaller $\ep$ is, the stronger the protection, and the larger $\alpha$ is, the stronger the protection.
To satisfy $(\alpha, \ep)$-RDP for any $\alpha$ is equivalent to $\ep$-DP.

The composition theorem~\cite{dwork2014algorithmic,kairouz2015composition} holds for R\'{e}nyi differential privacy as well as $(\ep, \delta)$-DP.
Furthermore, R\'{e}nyi differential privacy can be translated into $(\ep, \delta )$-DP.

\begin{envP}[Composition of R\'{e}nyi differential privacy~\cite{mironov2017renyi}]\label{prop:composition}
Let $\mathcal{M}_1 : \mathcal{D} \to \R^{d_1}$ be $(\alpha, \ep_1)$-RDP and $\mathcal{M}_2 : \mathcal{D} \times \R^{d_1} \to \R^{d_2}$ $(\alpha, \ep_2)$-RDP.
Then the mechanism $\mathcal{M} : \mathcal{D} \to \R^{d_1} \times \R^{d_2}$ defined as $\mathcal{M}(D) = (\mathcal{M}_1(D), \mathcal{M}_2(D, \mathcal{M}_1(D)))$ satisfies $(\alpha, \ep_1 + \ep_2)$-RDP.
\end{envP}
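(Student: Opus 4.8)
The plan is to bound the R\'enyi divergence $D_\alpha(\mathcal{M}(D)||\mathcal{M}(D'))$ directly from its definition by exploiting the product structure of the joint output. First I would factor the joint density of $\mathcal{M}(D)$ at a point $(y_1,y_2) \in \R^{d_1} \times \R^{d_2}$ via the chain rule: writing $P_1, Q_1$ for the densities of $\mathcal{M}_1(D), \mathcal{M}_1(D')$ and $P_2(\cdot \mid y_1), Q_2(\cdot \mid y_1)$ for the densities of $\mathcal{M}_2(D, y_1), \mathcal{M}_2(D', y_1)$, the joint densities of $\mathcal{M}(D)$ and $\mathcal{M}(D')$ factor as $P(y_1,y_2) = P_1(y_1)\,P_2(y_2 \mid y_1)$ and $Q(y_1,y_2) = Q_1(y_1)\,Q_2(y_2 \mid y_1)$.

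Substituting this factorization into the definition of the R\'enyi divergence and separating the two integrals (all integrands being nonnegative), I would rewrite
\[
\int_{\R^{d_1}}\!\int_{\R^{d_2}} P(y_1,y_2)^\alpha Q(y_1,y_2)^{1-\alpha}\,dy_2\,dy_1 = \int_{\R^{d_1}} P_1(y_1)^\alpha Q_1(y_1)^{1-\alpha}\,I(y_1)\,dy_1,
\]
where the inner factor $I(y_1) := \int_{\R^{d_2}} P_2(y_2\mid y_1)^\alpha Q_2(y_2\mid y_1)^{1-\alpha}\,dy_2$ is exactly $\exp\bigl((\alpha-1)D_\alpha(\mathcal{M}_2(D,y_1)||\mathcal{M}_2(D',y_1))\bigr)$. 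The next step is to invoke the two hypotheses. Since $\mathcal{M}_2$ is $(\alpha,\ep_2)$-RDP for every fixed value of its auxiliary argument, the inner divergence is at most $\ep_2$ uniformly in $y_1$, so $I(y_1) \le e^{(\alpha-1)\ep_2}$. Pulling this constant out of the outer integral and recognizing the remaining factor as $\exp\bigl((\alpha-1)D_\alpha(\mathcal{M}_1(D)||\mathcal{M}_1(D'))\bigr) \le e^{(\alpha-1)\ep_1}$ by the $(\alpha,\ep_1)$-RDP of $\mathcal{M}_1$, I would conclude that the whole integral is at most $e^{(\alpha-1)(\ep_1+\ep_2)}$. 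Applying $\tfrac{1}{\alpha-1}\log$ to both sides then gives $D_\alpha(\mathcal{M}(D)||\mathcal{M}(D')) \le \ep_1 + \ep_2$, as required.

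I expect the main obstacle to be justifying the uniform-in-$y_1$ bound on $I(y_1)$: it is essential that the RDP guarantee for $\mathcal{M}_2$ holds for \emph{every} fixed auxiliary input $y_1$ rather than merely in expectation over $y_1$, since only then can the constant $e^{(\alpha-1)\ep_2}$ be factored out before integrating against the nonnegative weight $P_1^\alpha Q_1^{1-\alpha}$. The positivity of $\alpha-1$ plays a quiet but crucial role throughout: it makes the exponential monotone in the correct direction so that the pointwise divergence bound transfers to a bound on $I(y_1)$, and it ensures that the final $\tfrac{1}{\alpha-1}\log$ step preserves the inequality.
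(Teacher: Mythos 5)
Your proof is correct and is precisely the standard argument for adaptive composition of RDP (Proposition 1 in the cited Mironov paper); the paper itself gives no proof of this statement, simply citing that reference. Your identification of the key point --- that the RDP guarantee for $\mathcal{M}_2$ must hold uniformly over the fixed auxiliary input $y_1$ so that $I(y_1) \le e^{(\alpha-1)\ep_2}$ can be pulled out of the outer integral --- is exactly right, and the chain-rule factorization plus the two-stage bound is the intended route.
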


\begin{envP}[Translation from $(\alpha,\ep)$-RDP to $(\ep, \delta )$-DP~\cite{mironov2017renyi}]\label{prop:rdp-dp}
If $\mathcal{M}$ is an $(\alpha, \ep)$-RDP mechanism, it also satisfies $(\ep + \frac{\log \frac{1}{\delta}}{\alpha -1}, \delta)$-DP for any $0<\delta<1$.
\end{envP}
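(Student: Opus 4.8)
The plan is to convert the Rényi-divergence hypothesis into a pointwise likelihood-ratio bound that holds outside a single low-probability event, and then split the target probability over that event and its complement. Write $P$ and $Q$ for the densities of $\mathcal{M}(D)$ and $\mathcal{M}(D')$ on $\R^d$, and set $\ep' := \ep + \frac{\log(1/\delta)}{\alpha - 1}$, the exponent claimed in the statement. The RDP hypothesis $D_{\alpha}(P\|Q) \le \ep$ unwinds, directly from the definition of Rényi divergence, into the moment bound
\[
\mathbb{E}_{x \sim P}\left[\left(\frac{P(x)}{Q(x)}\right)^{\alpha - 1}\right] = \int_{\R^d} P(x)^{\alpha} Q(x)^{1-\alpha}\, dx \le e^{(\alpha-1)\ep},
\]
so the likelihood ratio $P/Q$, raised to the power $\alpha-1$, has a controlled first moment under $P$.

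First I would introduce the ``bad'' event $B := \{ x \in \R^d : P(x) > e^{\ep'} Q(x) \}$, equivalently $\{ (P/Q)^{\alpha-1} > e^{(\alpha-1)\ep'} \}$ since $\alpha - 1 > 0$; this is exactly the region on which the factor $e^{\ep'}$ fails to dominate the ratio. Applying Markov's inequality to the nonnegative variable $(P/Q)^{\alpha-1}$ under $P$ and invoking the moment bound gives
\[
\Pr_{x \sim P}[B] \le \frac{\mathbb{E}_{x \sim P}[(P/Q)^{\alpha-1}]}{e^{(\alpha-1)\ep'}} \le e^{(\alpha-1)(\ep - \ep')} = \delta,
\]
where the final equality is the arithmetic forced by the choice of $\ep'$, since $(\alpha-1)(\ep-\ep') = -\log(1/\delta)$. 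Then for any measurable $S \subset \R^d$ I would split
\[
\Pr[\mathcal{M}(D) \in S] = \int_{S \setminus B} P(x)\, dx + \int_{S \cap B} P(x)\, dx.
\]
On $S \setminus B$ the inequality $P(x) \le e^{\ep'} Q(x)$ holds pointwise, so the first integral is at most $e^{\ep'}\int_{S} Q(x)\,dx = e^{\ep'}\Pr[\mathcal{M}(D') \in S]$; the second integral is at most $\Pr_{P}[B] \le \delta$. Adding the two bounds yields $\Pr[\mathcal{M}(D) \in S] \le e^{\ep'}\Pr[\mathcal{M}(D') \in S] + \delta$, which is $(\ep', \delta)$-DP, the claim.

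The argument has no genuine obstacle --- it is a clean Markov-plus-case-split --- but the step deserving most care is checking that the threshold $\ep'$ is calibrated so that the Markov bound collapses to \emph{exactly} $\delta$ rather than some larger constant; this is precisely where the additive term $\frac{\log(1/\delta)}{\alpha-1}$ is pinned down, and getting it right requires carrying the factor $\alpha-1$ consistently through the exponentials. A secondary point is to confirm the rewriting of $D_{\alpha}$ as a moment of the likelihood ratio taken under $P$ (rather than under $Q$), since it is this choice that makes Markov's inequality bound the $P$-probability of $B$ --- the quantity that appears on the left-hand side of the DP inequality --- rather than the $Q$-probability.
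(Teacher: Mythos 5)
Your argument is correct: the moment identity $\int P^{\alpha}Q^{1-\alpha}\,dx=\mathbb{E}_{x\sim P}[(P/Q)^{\alpha-1}]$, the Markov bound on the bad set $B$, the calibration $(\alpha-1)(\ep-\ep')=\log\delta$, and the final split over $S\setminus B$ and $S\cap B$ all check out, and together they deliver exactly the stated $(\ep+\frac{\log(1/\delta)}{\alpha-1},\delta)$-DP guarantee. Note, however, that the paper does not prove this proposition at all --- it is imported verbatim from Mironov's R\'enyi differential privacy paper --- so the only proof to compare against is Mironov's original one, and yours takes a genuinely different route. Mironov proves a ``probability preservation'' inequality via H\"older, $\Pr_P[S]\le \bigl(e^{\ep}\,\Pr_Q[S]\bigr)^{(\alpha-1)/\alpha}$ (up to normalization), and then splits into two cases according to whether $\Pr_Q[S]$ is large enough for the multiplicative term $e^{\ep'}\Pr_Q[S]$ to dominate or small enough that the whole probability is below $\delta$; your proof instead applies Markov's inequality to the $(\alpha-1)$-th moment of the likelihood ratio under $P$ to isolate a single low-probability bad event, which is arguably the more transparent ``blatant non-privacy region'' picture and generalizes directly to other tail-of-privacy-loss arguments. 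Both yield the identical constant here; Mironov's H\"older route is the one that extends to the sharper conversions in later work, but as a proof of this exact statement yours is complete and correct. One cosmetic point: where $Q=0$ the ratio $P/Q$ should be read as $+\infty$, which is harmless since the hypothesis $D_{\alpha}(P\Vert Q)\le\ep<\infty$ forces that set to be $P$-null.
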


By the following lemma, the result with the unbounded condition can be reduced to the bounded condition.
\begin{envL}[Weak triangle inequality~\cite{mironov2017renyi}]\label{lem:triangle}
Let $P, Q, R$ be probability distributions on $\R^d$.
For $\alpha > 1$ and $\frac{1}{p} + \frac{1}{q} = 1$, it holds
\[
    D_{\alpha}(P||Q) \le \frac{\alpha - \frac{1}{p}}{\alpha -1} D_{p\alpha}(P||R) + D_{q(\alpha-\frac{1}{p})}(R||Q).
\]
\end{envL}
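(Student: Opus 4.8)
The plan is to reduce everything to the quantity $\int_{\R^d} P^\alpha Q^{1-\alpha}\,dx$, which by definition equals $\exp((\alpha-1)D_\alpha(P||Q))$, and to bound it by inserting the auxiliary distribution $R$ and applying H\"older's inequality. Concretely, I would write the (nonnegative) integrand as a product $P^\alpha Q^{1-\alpha} = (P^\alpha R^{-\beta})\,(R^{\beta} Q^{1-\alpha})$ for a parameter $\beta$ to be chosen, and then apply H\"older with conjugate exponents $p,q>1$ satisfying $1/p+1/q=1$. This yields
\[
\int P^\alpha Q^{1-\alpha}\,dx \le \left(\int P^{p\alpha} R^{-p\beta}\,dx\right)^{1/p}\left(\int R^{q\beta} Q^{q(1-\alpha)}\,dx\right)^{1/q}.
\]

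The key step is to choose $\beta$ so that each factor becomes a genuine R\'enyi-divergence integral of the prescribed order. For the first factor I want the exponent of $R$ to equal $1-p\alpha$, which forces $\beta = \alpha - 1/p$; then $\int P^{p\alpha}R^{1-p\alpha}\,dx = \exp((p\alpha-1)D_{p\alpha}(P||R))$. With this same $\beta$, a short computation using $1/p+1/q=1$ shows that the second factor has $R$-exponent $q(\alpha-1/p)$ and $Q$-exponent $1-q(\alpha-1/p)$, so it equals $\exp((q(\alpha-1/p)-1)D_{q(\alpha-1/p)}(R||Q))$. Raising the first factor to the power $1/p$ produces the coefficient $(p\alpha-1)/p=\alpha-1/p$, and raising the second to $1/q$ produces $(q(\alpha-1/p)-1)/q=\alpha-1$, again by conjugacy.

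Taking logarithms of the H\"older bound then gives
\[
(\alpha-1)D_\alpha(P||Q)\le(\alpha-1/p)\,D_{p\alpha}(P||R)+(\alpha-1)\,D_{q(\alpha-1/p)}(R||Q),
\]
and dividing through by $\alpha-1>0$ produces exactly the claimed inequality.

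The main obstacle, such as it is, is bookkeeping rather than any deep difficulty: I must verify that the single choice $\beta=\alpha-1/p$ simultaneously turns both factors into R\'enyi integrals of the stated orders and makes the post-H\"older coefficients collapse to $\alpha-1/p$ and $\alpha-1$. All the identities needed are consequences of $1/p+1/q=1$, but they must be tracked carefully, since a sign error in the exponent of $R$ or a swap of which order is paired with $p$ versus $q$ would break the entire matching. I would also record in passing that the orders $p\alpha$ and $q(\alpha-1/p)$ both exceed $1$, so that the two R\'enyi divergences on the right-hand side are well defined; this follows from $\alpha>1$ together with $p,q>1$, since $\alpha-1/p>1/q$ gives $q(\alpha-1/p)>1$.
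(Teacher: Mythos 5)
Your argument is correct: the choice $\beta=\alpha-\tfrac{1}{p}$ does make both H\"older factors into R\'enyi integrals of orders $p\alpha$ and $q(\alpha-\tfrac{1}{p})$, the exponent bookkeeping all checks out via $\tfrac{1}{p}+\tfrac{1}{q}=1$, and the resulting coefficients $\alpha-\tfrac{1}{p}$ and $\alpha-1$ are exactly right. Note that the paper states this lemma as a cited result from Mironov's work and gives no proof of its own, so there is nothing to compare against internally; your H\"older-based derivation is essentially the standard proof from the cited source. The only point worth recording explicitly is that the factorization $P^{\alpha}Q^{1-\alpha}=(P^{\alpha}R^{-\beta})(R^{\beta}Q^{1-\alpha})$ implicitly requires $R>0$ wherever the integrand is positive (an absolute-continuity caveat that is harmless here, since the paper only applies the lemma to nondegenerate Gaussians).
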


\subsection{Synthetic Data Generation with Mean Vector and Covariance Matrix}

In this paper, we focus on a simple synthetic data generation with the mean vector and the covariance matrix of the original dataset $\mathcal{M}_G : \mathcal{D} \to [-1, 1]^d $ as shown in Fig.~\ref{pic:gen-algo}.
This method is identical to the Gaussian copula~\cite{sklar1959fonctions} with the assumption that the marginal distributions are all normal distributions.

The mechanism $\mathcal{M}_G$ generates synthetic data as follows.
First, for dataset $D = \{ x_i \}_{i=1, \ldots , n} \in \mathcal{D}$, the mean vector $\mu \in \R^d$ and the covariance matrix $\Sigma \in \R^{d \times d}$ are computed:
\[
\mu := \frac{1}{n} \sum_{i=1}^n x_i, \ \ \Sigma := \frac{1}{n} \sum_{i=1}^n x {}^tx    - \mu {}^t \mu.
\]
Next, a sample is drawn from a multivariate normal distribution $\mathcal{N}(\mu, \Sigma)$, and its values are cut into the range $[-1, 1]^d$.

We denote by $\mathcal{M}_G^n : \mathcal{D} \to [-1,1]^{d \times n}$ the mechanism that simultaneously outputs $n$ records by $\mathcal{M}_G$.
By Proposition~\ref{prop:composition}, we see that if $\mathcal{M}_G$ satisfies $(\alpha, \ep)$-RDP, then $\mathcal{M}_G^n$ also satisfies $(\alpha, n\ep)$-RDP.

\begin{figure}[t]
\begin{center}
\includegraphics[width=0.7\linewidth]{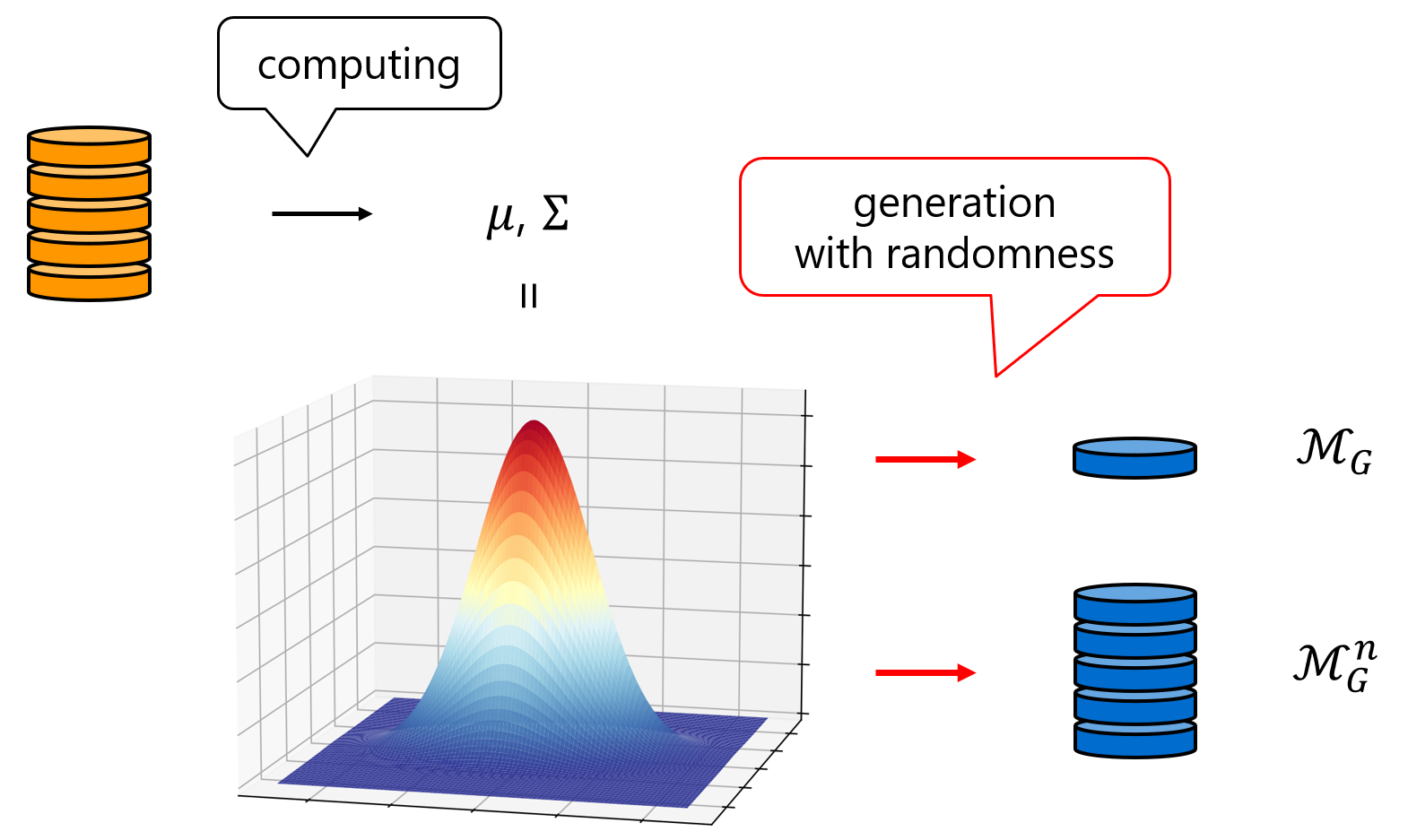}
\caption{Synthetic data generation algorithms $\mathcal{M}_G$ and $\mathcal{M}_G^n$}
\label{pic:gen-algo}
\end{center}
\end{figure}

\subsection{Properties of Symmetric Matrices}

We explain properties of symmetric matrices for the proof of the main theorem.
\begin{envD}[symmetric matrix]
A square matrix $A$ is called {\bf symmetric} if $A = {}^t A$ holds.
\end{envD}
\begin{envD}[positive-definite, semi-positive definite]
For a $d$-dimensional symmetric matrix $A$, the following two conditions are equivalent: \\
{\rm (1)} For all $x \in \R^d\backslash \{0\}$, it holds ${}^t x Ax >0$ $( \ge 0 )$; \\
{\rm (2)} All eigenvalues of $A$ are positive $($non-negative$)$. \\
If $A$ satisfies these conditions, then $A$ is called {\bf positive-definite (positive semi-definite)}.
\end{envD}

The following two lemmas are well-known facts~\cite{harville1998matrix}.
\begin{envL}\label{lem:a}
    Let $A, B$ be positive-definite symmetric matrices.
    If $AB$ is symmetric, then $AB$ is also positive-definite.
\end{envL}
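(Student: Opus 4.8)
The plan is to reduce the claim to a statement about eigenvalues. Because $AB$ is symmetric by hypothesis, the equivalence in the definition of positive-definiteness applies to it, so it suffices to prove that every eigenvalue of $AB$ is strictly positive. The symmetry assumption is doing essential work here: a general, non-symmetric matrix can have all positive eigenvalues without being positive-definite, so I cannot drop it.

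To control the spectrum of $AB$, I would pass to the symmetric positive-definite square root $A^{1/2}$ of $A$, which exists and is invertible precisely because $A$ is symmetric positive-definite. The key identity is the similarity
\[
AB = A^{1/2}\left(A^{1/2} B A^{1/2}\right) A^{-1/2},
\]
which shows that $AB$ and $M := A^{1/2} B A^{1/2}$ have the same characteristic polynomial and hence the same eigenvalues. Thus it is enough to understand the eigenvalues of $M$.

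Next I would check that $M$ is itself symmetric positive-definite. Symmetry is immediate, since ${}^t M = A^{1/2}\,{}^t B\, A^{1/2} = A^{1/2} B A^{1/2} = M$, using that $A^{1/2}$ and $B$ are symmetric. For definiteness, fix $x \in \R^d \backslash \{0\}$ and set $y := A^{1/2} x$; because $A^{1/2}$ is invertible we have $y \neq 0$, and therefore ${}^t x M x = {}^t y B y > 0$ by positive-definiteness of $B$. Hence $M$ is symmetric positive-definite, so all of its eigenvalues are positive, and by the similarity all eigenvalues of $AB$ are positive as well. Combined with the symmetry of $AB$, this yields that $AB$ is positive-definite.

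I expect the only real subtlety to be remembering that positive eigenvalues alone are insufficient and that the hypothesis that $AB$ is symmetric is exactly what closes this gap. As a sanity check and alternative route, note that symmetry forces $AB = {}^t(AB) = {}^t B\, {}^t A = BA$, so $A$ and $B$ commute; commuting symmetric matrices are simultaneously orthogonally diagonalizable, say $A = P D_A\, {}^t P$ and $B = P D_B\, {}^t P$ with $D_A, D_B$ diagonal and positive, whence $AB = P (D_A D_B)\, {}^t P$ is visibly positive-definite. The square-root argument above is essentially a way to reach the same conclusion without invoking the simultaneous diagonalization theorem.
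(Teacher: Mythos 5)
Your proof is correct. Note that the paper does not actually prove this lemma: it states it (together with Lemma~\ref{lem:b}) as a well-known fact and cites a matrix-algebra reference, so there is no in-paper argument to compare against. Your square-root argument is complete: the similarity $AB = A^{1/2}(A^{1/2}BA^{1/2})A^{-1/2}$ correctly transfers the spectrum of the symmetric positive-definite matrix $A^{1/2}BA^{1/2}$ to $AB$, and you are right to flag that the hypothesis that $AB$ is symmetric is what lets you pass from ``all eigenvalues positive'' to ``positive-definite.'' Your alternative via $AB=BA$ and simultaneous diagonalization is also valid. It is worth observing that the conjugation-by-a-square-root device you use is exactly the trick the paper itself deploys one level up, in its proof of Proposition~\ref{prop:matrices}, where it writes $C=S^2$ and reduces the three-factor product to an application of this lemma; so your argument is very much in the spirit of how the paper uses the result.
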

\begin{envL}\label{lem:b}
    Let $A$ be a positive-definite symmetric matrix.
    For an invertible matrix $S$ that is the same size as $A$, ${}^t SAS$ is also positive-definite.
\end{envL}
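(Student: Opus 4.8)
The plan is to verify the two defining properties of a positive-definite matrix for $B := {}^t S A S$: first that it is symmetric (so the notion even applies), and then that its associated quadratic form is strictly positive on nonzero vectors, via condition (1) of the positive-definite definition. Symmetry is a direct computation: since $A$ is symmetric, ${}^t B = {}^t({}^t S A S) = {}^t S \, {}^t A \, {}^t({}^t S) = {}^t S A S = B$, using ${}^t A = A$ and the standard transpose-reversal rule. This disposes of the symmetry requirement immediately.

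For positive-definiteness, I would fix an arbitrary $x \in \R^d \setminus \{0\}$ and rewrite the quadratic form by grouping factors:
\[
{}^t x \, ({}^t S A S) \, x = ({}^t x \, {}^t S) \, A \, (S x) = {}^t(S x) \, A \, (S x).
\]
Setting $y := S x$, this is exactly ${}^t y \, A \, y$. Since $A$ is positive-definite, condition (1) gives ${}^t y A y > 0$ provided $y \neq 0$, and then ${}^t x ({}^t S A S) x > 0$ follows. As $x$ ranges over all nonzero vectors, this establishes that $B$ is positive-definite.

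The single step where the hypothesis on $S$ is actually consumed — and the only point requiring any care — is the implication $x \neq 0 \Rightarrow y = S x \neq 0$, which holds precisely because $S$ is invertible. Without invertibility, $S x$ could vanish for some nonzero $x$, and the argument would only yield ${}^t x B x \ge 0$, i.e.\ positive semi-definiteness. So rather than a genuine obstacle, the crux is simply to flag that invertibility is what upgrades the conclusion from semi-definite to definite; the rest is a one-line substitution followed by an appeal to the positive-definiteness of $A$.
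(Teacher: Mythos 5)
Your proof is correct: the symmetry check, the substitution $y = Sx$, and the observation that invertibility of $S$ is exactly what guarantees $y \neq 0$ together constitute the standard complete argument. The paper itself states this lemma without proof, citing it as a well-known fact, so there is no alternative approach to compare against; your write-up supplies precisely the argument the citation points to.
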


\begin{envP}\label{prop:matrices}
Let $A, B, C$ be positive-definite symmetric real matrices. If $ABC$ is symmetric, then $ABC$ is also positive-definite.
\end{envP}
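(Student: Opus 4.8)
The plan is to reduce the three-factor claim to the two-factor Lemma~\ref{lem:a} by conjugating with the symmetric positive-definite square root of $A$. Because $A$ is positive-definite symmetric, the spectral theorem provides an invertible positive-definite symmetric square root $A^{1/2}$, whose inverse $A^{-1/2}$ is again positive-definite symmetric; this existence is the only fact I will use beyond the two stated lemmas.

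First I would introduce $B' := A^{1/2} B A^{1/2}$ and $C' := A^{-1/2} C A^{-1/2}$. Since $A^{1/2}$ and $A^{-1/2}$ are symmetric, these equal ${}^t(A^{1/2}) B (A^{1/2})$ and ${}^t(A^{-1/2}) C (A^{-1/2})$, so Lemma~\ref{lem:b} shows that both are positive-definite, and each is clearly symmetric. The reason for this particular choice is the telescoping identity $B'C' = A^{1/2} B A^{1/2} A^{-1/2} C A^{-1/2} = A^{1/2} B C A^{-1/2} = A^{-1/2}(ABC)A^{-1/2}$. From the last expression I would read off that $B'C'$ is symmetric, because $ABC$ is symmetric by hypothesis and $A^{-1/2}$ is symmetric, so ${}^t(B'C') = A^{-1/2}\,{}^t(ABC)\,A^{-1/2} = B'C'$. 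Thus $B'$ and $C'$ are positive-definite symmetric matrices whose product is symmetric, and Lemma~\ref{lem:a} applies to give that $B'C'$ is positive-definite.

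Finally I would rearrange the same identity as $ABC = A^{1/2}(B'C')A^{1/2} = {}^t(A^{1/2})(B'C')(A^{1/2})$, exhibiting $ABC$ as a congruence of the positive-definite matrix $B'C'$ by the invertible matrix $A^{1/2}$; a second application of Lemma~\ref{lem:b} then yields that $ABC$ is positive-definite. I expect the only real obstacle to be the design of the congruence: one must take $S = A^{\pm 1/2}$ symmetric so that Lemma~\ref{lem:b} keeps each factor positive-definite, and simultaneously arrange that the product of the two new factors coincides with $A^{-1/2}(ABC)A^{-1/2}$, which is precisely where the hypothesis that $ABC$ is symmetric enters. Recognizing that the asymmetric-looking matrix $A^{1/2}BCA^{-1/2}$ is in fact symmetric and equal to $B'C'$ is the crux; the remaining steps are routine verifications of the matrix identities above.
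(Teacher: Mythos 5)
Your proof is correct and follows essentially the same route as the paper: conjugate by the symmetric positive-definite square root of an outer factor to reduce the three-factor product to a symmetric product of two positive-definite symmetric matrices, then apply Lemma~\ref{lem:a} and undo the congruence with Lemma~\ref{lem:b}. The only (immaterial) difference is that you use $A^{1/2}$ where the paper uses the square root of $C$.
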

\begin{proof}
Set $D := ABC  = CBA$.
Since $C$ is positive-definite, we can obtain the spectral decomposition $C := \sum_{i=1}^d \lambda_i \theta_i {}^t \theta_i$, 
where $\lambda_i > 0$ for all $i = 1, \ldots , d$.
Then we set $S := \sum_{i=1}^d \sqrt{\lambda_i} \theta_i {}^t \theta_i$.
We see that $S$ is symmetric and $C = S^2$ holds.
We have 
\[
S^{-1} D S^{-1} = S^{-1} A S^{-1} SBS = SBS S^{-1} A S^{-1}.
\]
By applying $S^{-1} A S^{-1}$ and $SBS$ to Lemma~\ref{lem:a} and Lemma~\ref{lem:b}, we see that $S^{-1} D S^{-1}$ is positive-definite. Thus, $D$ is also positive-definite.
\end{proof}

\section{Main Theorem}\label{sec:main}

In this paper, we prove the upper bound of $\ep$ such that the mechanism $\mathcal{M}_G$ satisfies $(\alpha, \ep)$-R\'{e}nyi differential privacy for a fixed $\alpha$.
We assume that all datasets have a limitation for the minimum eigenvalue of their covariance matrices.
Specifically, for a fixed $\sigma>0$, we define the set of datasets as 
\[
\mathcal{D}_{\sigma} := \{D \in [-1,1]^{n \times d} \mid z \in S^{d-1}, {}^t z\Sigma_D z \ge \sigma \}.
\]
We also set $\tau := \frac{4d}{\sigma}$.

First, the result under the unbounded condition is the following theorem.
We assume that the number of records in an original dataset is $n$ and that in its neighboring dataset is $n+1$.

\begin{envT}\label{thm:main-private}
Under the unbounded condition, let $\alpha > 1$.
We assume that
\begin{equation}\label{eqn:thm-cond}
\frac{n}{n+1} < \tau, \ \alpha < \min \left\{ n+1, \frac{n^2}{\tau(n+1) -n} \right\}.
\end{equation}
Then, the synthetic data generation mechanism $\mathcal{M}_G$ satisfies $(\alpha, \ep_{\alpha})$-RDP for $\ep_{\alpha}:=\max \{ \ep_{\alpha 1}, \ep_{\alpha 2} \}$.
Here, 
\begin{eqnarray*}
\ep_{\alpha 1} &=& \frac{\alpha}{2}  \cdot \frac{\tau}{(n+1)(n+1-\alpha)} 
+ \frac{\alpha d}{2(\alpha -1)} \log \frac{n}{n+1}  - \frac{d}{2(\alpha -1)} \log \left(1 - \frac{\alpha}{n+1} \right) \\
&&- \frac{1}{2(\alpha -1)} \log \min \left\{ 1, \frac{1 + \alpha \frac{n \tau}{(n+1)(n+1-\alpha)}}{(1 + \frac{ \tau}{n+1})^{\alpha}} \right\}
\end{eqnarray*}
and 
\begin{eqnarray*}
\ep_{\alpha 2} &=& \frac{\alpha}{2} \cdot \frac{\tau}{n(n+\alpha) -\alpha (n+1) \tau}  
 +  \frac{\alpha d}{2(\alpha -1)} \log \frac{n+1}{n} - \frac{d}{2(\alpha -1)} \log \left(1 + \frac{\alpha}{n}\right) \\
&& - \frac{1}{2(\alpha -1)} \log \min \Biggl\{ 1, \frac{1 - \frac{\alpha (n+1) \tau}{(n  + \alpha)n} }{(1 - \frac{\tau}{n})^{\alpha}} \Biggr\} .
\end{eqnarray*}
\end{envT}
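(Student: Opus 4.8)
The plan is to reduce the claim to a bound on the R\'enyi divergence of two Gaussians and then control the two terms of its closed form separately. First I would discard the final clipping step: writing $c : \R^d \to [-1,1]^d$ for the coordinatewise projection onto the box, the law of $\mathcal{M}_G(D)$ is the pushforward of $\mathcal{N}(\mu_D, \Sigma_D)$ by the deterministic map $c$, so the post-processing (data-processing) inequality for R\'enyi divergence gives
\[
D_{\alpha}(\mathcal{M}_G(D) || \mathcal{M}_G(D')) \le D_{\alpha}(\mathcal{N}(\mu_D, \Sigma_D) || \mathcal{N}(\mu_{D'}, \Sigma_{D'})).
\]
It then suffices to bound the Gaussian divergence, for which I would use the known closed form
\[
D_{\alpha}(\mathcal{N}(\mu_0, \Sigma_0) || \mathcal{N}(\mu_1, \Sigma_1)) = \frac{\alpha}{2} {}^t(\mu_0 - \mu_1) \Sigma_{\alpha}^{-1} (\mu_0 - \mu_1) - \frac{1}{2(\alpha - 1)} \log \frac{|\Sigma_{\alpha}|}{|\Sigma_0|^{1-\alpha} |\Sigma_1|^{\alpha}},
\]
where $\Sigma_{\alpha} := (1 - \alpha) \Sigma_0 + \alpha \Sigma_1$; this quantity is finite exactly when $\Sigma_{\alpha}$ is positive-definite. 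The problem thus splits into a mean (quadratic-form) term and a covariance (log-determinant) term.

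Next I would use the rank-one structure of a one-record change. If $D'$ adjoins $y \in [-1,1]^d$ to $D = \{x_i\}_{i \le n}$, a direct calculation gives $\mu_{D'} - \mu_D = \frac{y - \mu_D}{n+1}$ and $\Sigma_{D'} = \frac{n}{n+1} \Sigma_D + \frac{n}{(n+1)^2} v\,{}^tv$ with $v := \mu_D - y$; the removal direction is obtained by reading the same identity with the size-$(n+1)$ dataset as the base. Both the mean gap and the covariance perturbation therefore lie along the single direction $v$, and the whole computation is governed by the scalar $w := {}^tv \Sigma_D^{-1} v$. The constraint $D \in \mathcal{D}_{\sigma}$ forces $\Sigma_D \succeq \sigma I$, while ${}^tv v \le 4d$ because $\mu_D, y \in [-1,1]^d$; together these give $0 \le w \le 4d/\sigma = \tau$, which is precisely how $\tau$ enters. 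Substituting the update into $\Sigma_{\alpha}$ collapses it to a scalar multiple of the covariance plus (or minus) a rank-one term, so positive-definiteness of $\Sigma_{\alpha}$ reduces to the scalar inequalities $\alpha < n+1$ when the record is added and $\alpha(n+1)\tau < n(n+\alpha)$, i.e. $\alpha < \frac{n^2}{\tau(n+1)-n}$, when it is removed. These are exactly the hypotheses \eqref{eqn:thm-cond}, and the side condition $\frac{n}{n+1} < \tau$ merely keeps the latter bound positive. I would verify positive-definiteness of these interpolated matrices using Lemma~\ref{lem:a}, Lemma~\ref{lem:b}, and Proposition~\ref{prop:matrices}.

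I would then bound the two terms. For the mean term, the monotone ordering $\Sigma_{\alpha} \succeq \frac{n+1-\alpha}{n+1} \Sigma_D$ in the additive case and the Sherman--Morrison identity in the subtractive case turn ${}^tv \Sigma_{\alpha}^{-1} v$ into an explicit increasing function of $w$, maximized at $w = \tau$; this yields the leading fractions $\frac{\alpha}{2} \cdot \frac{\tau}{(n+1)(n+1-\alpha)}$ and $\frac{\alpha}{2} \cdot \frac{\tau}{n(n+\alpha) - \alpha(n+1)\tau}$ of $\ep_{\alpha 1}$ and $\ep_{\alpha 2}$. For the covariance term, the matrix determinant lemma turns each determinant into a scalar factor of the form $1 \pm (\text{const}) \cdot w$, the factors $|\Sigma_D|$ cancel, and the log-determinant reduces to the stated $d$-dependent constants together with a single scalar function $f(w)$ of the shape $\frac{1 + \beta w}{(1 + \gamma w)^{\alpha}}$ (with analogous signs in the removal case). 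Since $f(0) = 1$, $f$ increases at $0$ because $\alpha > 1$, and $f$ eventually decreases, $f$ is unimodal on the feasible interval, so its minimum over $[0, \tau]$ is attained at an endpoint; this is exactly the $\min\{1, \cdot\}$ that appears inside the logarithms of $\ep_{\alpha 1}$ and $\ep_{\alpha 2}$. The two orderings of the neighboring pair ($n$ records in the numerator, or $n+1$) produce $\ep_{\alpha 1}$ and $\ep_{\alpha 2}$, and taking the maximum proves the bound.

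The step I expect to be the main obstacle is the covariance term: establishing that $\frac{d}{dw} \log f$ changes sign at most once on $[0, \tau]$, so that the endpoint evaluation $\min\{1, f(\tau)\}$ is legitimate, while simultaneously keeping $\Sigma_{\alpha}$ positive-definite. This is most delicate in the removal case, where one must parametrize by the covariance of the larger dataset to obtain the stated denominator $n(n+\alpha) - \alpha(n+1)\tau$ rather than a weaker one. Bounding the mean and covariance contributions by their separate suprema over $v$, rather than jointly, is what makes the clean closed form possible, at the cost of some tightness.
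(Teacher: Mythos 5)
Your proposal is correct and follows essentially the same route as the paper: reduce to the Gil et al.\ closed form for the Gaussian R\'enyi divergence, exploit the rank-one structure of a one-record change to collapse everything to a scalar parameter bounded by $\tau$, check positive-definiteness of $\Sigma_\alpha$ (which is exactly where the hypotheses~(\ref{eqn:thm-cond}) come from), and handle the log-determinant term by a unimodality/endpoint argument, which is precisely how the paper obtains the $\min\{1,\cdot\}$ factors. The only differences are cosmetic (you parametrize by ${}^tv\Sigma_D^{-1}v$ where the paper uses the nonzero eigenvalue of $\Sigma_1^{-1}X$, and you make the post-processing step for the clipping explicit, which the paper leaves implicit).
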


Next, under the bounded condition, we obtain the following statement as a corollary of Theorem~\ref{thm:main-private}.
\begin{envC}\label{cor:main-public}
Under the bounded condition, let $\alpha>1$.
We set
\[
c := \min \left\{ n+1, \frac{n^2}{\tau(n+1)-n} \right\}
\]
and assume that
\begin{equation}\label{eqn:cor-cond}
   \alpha < \frac{c^2}{2c-1}. 
\end{equation}
Then, the synthetic data generation mechanism $\mathcal{M}_G$ satisfies $(\alpha, \ep_{\alpha})$-RDP for the following $\ep$:
\begin{equation}\label{eqn:bound-public} 
    \ep_{\alpha} = \inf_{\frac{c-1}{c-\alpha} < p < \frac{c}{\alpha}} \frac{\alpha - \frac{1}{p}}{\alpha -1} \ep\left(p \alpha, n\right) + \ep \left(\frac{p \alpha -1}{p-1}, n+1 \right),
\end{equation}
where $\ep(\alpha, n)$ is the $\ep$ in Theorem~\ref{thm:main-private}.
\end{envC}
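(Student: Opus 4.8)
The plan is to reduce the bounded case to the already-established unbounded case of Theorem~\ref{thm:main-private} by means of the weak triangle inequality (Lemma~\ref{lem:triangle}). Under the bounded condition $D$ and $D'$ have the same size $n$ and differ in exactly one record, so neither is an unbounded neighbor of the other and the theorem cannot be applied directly. The key device is to introduce the intermediate dataset $D''$ consisting of the union of the records of $D$ and $D'$: since $D$ and $D'$ agree on $n-1$ records and differ on one, $D''$ has $n+1$ records and is an unbounded neighbor of \emph{both} $D$ and $D'$, each being recovered from $D''$ by deleting a single record. Setting $P = \mathcal{M}_G(D)$, $R = \mathcal{M}_G(D'')$, and $Q = \mathcal{M}_G(D')$, Lemma~\ref{lem:triangle} bounds $D_{\alpha}(P||Q)$ by a weighted sum of $D_{p\alpha}(P||R)$ and $D_{q(\alpha-\frac1p)}(R||Q)$.

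Next I would rewrite the two orders into the forms appearing in the statement. With Hölder conjugates $\frac1p+\frac1q=1$ we have $q=\frac{p}{p-1}$, hence $q(\alpha-\frac1p)=\frac{p\alpha-1}{p-1}$, and the prefactor of the first divergence is exactly $\frac{\alpha-1/p}{\alpha-1}$, matching \eqref{eqn:bound-public}. Each divergence is now a genuine unbounded-neighbor comparison: $D_{p\alpha}(\mathcal{M}_G(D)||\mathcal{M}_G(D''))$ compares a size-$n$ dataset with a size-$(n+1)$ one and is bounded by $\ep(p\alpha,n)$, while $D_{(p\alpha-1)/(p-1)}(\mathcal{M}_G(D'')||\mathcal{M}_G(D'))$ compares the size-$(n+1)$ dataset $D''$ with the size-$n$ dataset $D'$ and is bounded by $\ep\!\left(\frac{p\alpha-1}{p-1},n+1\right)$, once we index the bound by the size of the dataset appearing as the first argument of the divergence; the reversed orientation here is already absorbed by the $\max$ defining $\ep$ in Theorem~\ref{thm:main-private}. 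Substituting the two bounds produces the summand of \eqref{eqn:bound-public} for each fixed $p$, and the infimum over admissible $p$ gives the claimed $\ep_{\alpha}$.

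It remains to pin down the range of $p$. Theorem~\ref{thm:main-private} may be invoked only when its order is smaller than $c=\min\{n+1,\frac{n^2}{\tau(n+1)-n}\}$, the right-hand side of \eqref{eqn:thm-cond}. Imposing this on the first order gives $p\alpha<c$, i.e. $p<\frac{c}{\alpha}$; imposing it on the second order gives $\frac{p\alpha-1}{p-1}<c$, which (using $\alpha>1$, so that $p>1$ and $c-\alpha>0$) rearranges to $p>\frac{c-1}{c-\alpha}$. These are exactly the endpoints of the interval in \eqref{eqn:bound-public}. Finally, this interval is nonempty precisely when $\frac{c-1}{c-\alpha}<\frac{c}{\alpha}$, which on clearing the positive denominators becomes $\alpha(2c-1)<c^2$, i.e. the hypothesis \eqref{eqn:cor-cond}; under that assumption the infimum is taken over a nonempty set.

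The main obstacle I anticipate is the bookkeeping in the second step: matching the two transformed orders and the prefactor to the statement, and, more delicately, confirming that Theorem~\ref{thm:main-private} genuinely bounds each sub-divergence in the orientation produced by Lemma~\ref{lem:triangle} (the second divergence carries the larger dataset as its first argument, so the direction covered by the $\max$ is the one actually used). One must also verify that the shifted orders $p\alpha$ and $\frac{p\alpha-1}{p-1}$, paired with the sizes $n$ and $n+1$, satisfy the \emph{full} hypotheses \eqref{eqn:thm-cond} of the theorem throughout the chosen interval, and not merely the eigenvalue restriction defining $\mathcal{D}_{\sigma}$, so that every invoked bound is legitimate.
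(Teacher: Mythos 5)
Your proposal is correct and follows essentially the same route as the paper's own proof: introduce the $(n+1)$-record intermediate dataset that is an unbounded neighbor of both $D$ and $D'$, apply the weak triangle inequality (Lemma~\ref{lem:triangle}) with Theorem~\ref{thm:main-private} bounding each sub-divergence, and derive the admissible interval $\frac{c-1}{c-\alpha}<p<\frac{c}{\alpha}$ whose nonemptiness is exactly condition~(\ref{eqn:cor-cond}). Your version is in fact more explicit than the paper's sketch, in particular in constructing $D''$ concretely, rewriting $q(\alpha-\tfrac1p)$ as $\frac{p\alpha-1}{p-1}$, and flagging the orientation/indexing convention for $\ep(\cdot,n)$ versus $\ep(\cdot,n+1)$, which the paper leaves implicit.
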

\begin{proof}
For any neighboring datasets $D_1, D_2$ under the bounded condition, there exists a dataset $D_3$ such that $D_1$ and $D_3$ are neighboring and $D_2$ and $D_3$ are neighboring under the unbounded condition.
Then, to obtain Equation~(\ref{eqn:bound-public}), we use Lemma~\ref{lem:triangle}.
Here, the weak triangle inequality holds for all $p>1$, and the following condition is necessary:
\[
\max \left\{ p\alpha, \frac{p \alpha -1}{p-1} \right\} < c.
\]
This is equivalent to 
\[
\frac{c-1}{c-\alpha} < p < \frac{c}{\alpha}.
\]
The existence of $p$ is equivalent to Equation~(\ref{eqn:cor-cond}).
\end{proof}

\section{Proof of Theorem~\ref{thm:main-private}}\label{sec:proof}

In this section, we prove Theorem~\ref{thm:main-private}.
The following proposition is essential.
\begin{envP}[Gil et al.~\cite{gil2013renyi}]\label{prop:rdiv-bound}
Let $\alpha>1$ and $\mathcal{N}(\mu_1, \Sigma_1)$, $\mathcal{N}(\mu_2, \Sigma_2)$ be multivariate normal distributions.
If a matrix \[
T_{\alpha} := \alpha\Sigma_1^{-1} + (1-\alpha)\Sigma_2^{-1} 
\] is positive-definite, then it holds
\begin{align*}
&D_{\alpha}(\mathcal{N}(\mu_1, \Sigma_1)|| \mathcal{N}(\mu_2, \Sigma_2))\\
&= \frac{\alpha}{2}{}^t(\mu_1 - \mu_2) \Sigma_{\alpha}^{-1}(\mu_1 - \mu_2)-\frac{1}{2(\alpha -1)}\log \frac{|\Sigma_{\alpha}|}{|{\Sigma_1}|^{1-\alpha}|{\Sigma_2}|^{\alpha}}, 
\end{align*}
where $\Sigma_{\alpha} := (1-\alpha)\Sigma_1 + \alpha \Sigma_2$.
\end{envP}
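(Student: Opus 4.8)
The plan is to evaluate the defining integral $\int_{\R^d} P(x)^{\alpha} Q(x)^{1-\alpha}\,dx$ directly, where $P,Q$ are the densities of $\mathcal{N}(\mu_1,\Sigma_1)$ and $\mathcal{N}(\mu_2,\Sigma_2)$, and then substitute into the definition of $D_{\alpha}$. Writing out both densities and multiplying, the constant prefactors combine to $(2\pi)^{-d/2}|\Sigma_1|^{-\alpha/2}|\Sigma_2|^{-(1-\alpha)/2}$ (the powers of $2\pi$ telescope because $\tfrac{\alpha}{2}+\tfrac{1-\alpha}{2}=\tfrac12$), while the two exponents add to $-\tfrac12 Q(x)$ with $Q(x) = {}^t x\,T_{\alpha}\,x - 2\,{}^t x\,b + c$, where $b := \alpha\Sigma_1^{-1}\mu_1 + (1-\alpha)\Sigma_2^{-1}\mu_2$, $c := \alpha\,{}^t\mu_1\Sigma_1^{-1}\mu_1 + (1-\alpha)\,{}^t\mu_2\Sigma_2^{-1}\mu_2$, and the quadratic part is exactly $T_{\alpha}$.

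First I would complete the square, $Q(x) = {}^t(x-T_{\alpha}^{-1}b)\,T_{\alpha}\,(x-T_{\alpha}^{-1}b) + (c - {}^t b\,T_{\alpha}^{-1}b)$. This is where the hypothesis enters: positive-definiteness of $T_{\alpha}$ is precisely what makes the shifted Gaussian integrable, giving $\int_{\R^d}\exp(-\tfrac12\,{}^t y\,T_{\alpha}\,y)\,dy = (2\pi)^{d/2}|T_{\alpha}|^{-1/2}$. Collecting terms yields $\int P^{\alpha} Q^{1-\alpha}\,dx = |T_{\alpha}|^{-1/2}\,|\Sigma_1|^{-\alpha/2}\,|\Sigma_2|^{-(1-\alpha)/2}\exp\!\big(-\tfrac12(c-{}^t b\,T_{\alpha}^{-1}b)\big)$, so $D_{\alpha} = \tfrac{1}{\alpha-1}\log(\cdots)$ splits into a determinant part and a mean part, each to be matched against the claimed formula. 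For the determinant part, the key observation is the factorization $T_{\alpha} = \Sigma_1^{-1}\Sigma_{\alpha}\Sigma_2^{-1}$ (verified by expanding the right-hand side using $\Sigma_{\alpha}=(1-\alpha)\Sigma_1+\alpha\Sigma_2$), which gives $|T_{\alpha}| = |\Sigma_{\alpha}|/(|\Sigma_1|\,|\Sigma_2|)$; substituting this collapses the determinant logarithms exactly into $-\tfrac{1}{2(\alpha-1)}\log\frac{|\Sigma_{\alpha}|}{|\Sigma_1|^{1-\alpha}|\Sigma_2|^{\alpha}}$.

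The main obstacle is the mean part, namely showing $c - {}^t b\,T_{\alpha}^{-1}b = \alpha(1-\alpha)\,{}^t(\mu_1-\mu_2)\Sigma_{\alpha}^{-1}(\mu_1-\mu_2)$. Writing $A:=\alpha\Sigma_1^{-1}$ and $B:=(1-\alpha)\Sigma_2^{-1}$, so that $T_{\alpha}=A+B$, $b=A\mu_1+B\mu_2$, $c={}^t\mu_1 A\mu_1+{}^t\mu_2 B\mu_2$, I would expand and use $A-A T_{\alpha}^{-1}A = A T_{\alpha}^{-1}B$ together with $B-B T_{\alpha}^{-1}B = B T_{\alpha}^{-1}A$ to reduce the constant to ${}^t(\mu_1-\mu_2)\,M\,(\mu_1-\mu_2)$ with $M:=A T_{\alpha}^{-1}B$. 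The delicate point is that $B$ is not positive-definite (since $1-\alpha<0$), so one cannot appeal to definiteness; instead I would verify the purely algebraic identity $(A^{-1}+B^{-1})^{-1} = B(A+B)^{-1}A$, valid whenever $A$, $B$, $A+B$ are invertible. Since $A^{-1}+B^{-1}$ is symmetric its inverse is symmetric, which forces $M=A T_{\alpha}^{-1}B$ to equal this same symmetric matrix and thereby justifies the collapse into a perfect square in $\mu_1-\mu_2$. Finally $A^{-1}+B^{-1} = \tfrac1\alpha\Sigma_1+\tfrac1{1-\alpha}\Sigma_2 = \tfrac{1}{\alpha(1-\alpha)}\Sigma_{\alpha}$ gives $M=\alpha(1-\alpha)\Sigma_{\alpha}^{-1}$, and dividing the constant by $-2(\alpha-1)$ turns the factor $\alpha(1-\alpha)$ into $\alpha$, reproducing the term $\tfrac{\alpha}{2}\,{}^t(\mu_1-\mu_2)\Sigma_{\alpha}^{-1}(\mu_1-\mu_2)$ and completing the identification.
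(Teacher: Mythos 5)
Your proof is correct. Note that the paper offers no proof of this proposition at all --- it is imported verbatim from Gil et al.\ with a citation --- so there is no ``paper approach'' to compare against; your derivation supplies a self-contained argument where the paper relies on external literature. The steps all check out: the $(2\pi)$ prefactors telescope to $(2\pi)^{-d/2}$ and cancel against the $(2\pi)^{d/2}$ from the Gaussian integral; positive-definiteness of $T_{\alpha}$ is used exactly where it must be, to make $\int \exp(-\tfrac12\,{}^t y\,T_{\alpha}\,y)\,dy = (2\pi)^{d/2}|T_{\alpha}|^{-1/2}$ finite; the factorization $T_{\alpha} = \Sigma_1^{-1}\Sigma_{\alpha}\Sigma_2^{-1}$ is verified by a one-line expansion and yields $|T_{\alpha}| = |\Sigma_{\alpha}|/(|\Sigma_1|\,|\Sigma_2|)$, which collapses the determinant terms into the claimed logarithm (and, incidentally, shows $\Sigma_{\alpha}$ is invertible, so $\Sigma_{\alpha}^{-1}$ in the statement is well defined). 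Your handling of the mean term is the genuinely delicate part and is done properly: since $B = (1-\alpha)\Sigma_2^{-1}$ is negative-definite for $\alpha>1$, you correctly avoid any definiteness reasoning and instead rely on the purely algebraic identity $(A^{-1}+B^{-1})^{-1} = B(A+B)^{-1}A$, whose verification is immediate from $B(A+B)^{-1}A\,(A^{-1}+B^{-1}) = B(A+B)^{-1}(A+B)B^{-1} = I$; symmetry then forces $A T_{\alpha}^{-1} B = B T_{\alpha}^{-1} A = (A^{-1}+B^{-1})^{-1} = \alpha(1-\alpha)\Sigma_{\alpha}^{-1}$, the cross terms assemble into a perfect square in $\mu_1-\mu_2$, and the factor $\alpha(1-\alpha)$ divided by $-2(\alpha-1)$ gives exactly $\tfrac{\alpha}{2}$. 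The only implicit assumptions are that $\Sigma_1,\Sigma_2$ are themselves invertible (needed for the densities and for $A,B$ to exist), which is tacit in the statement as well.
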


For neighboring datasets $D_1, D_2 \in \mathcal{D}_{\sigma}$, we set the mean vectors as $\mu_1, \mu_2$ and the covariance matrices as $\Sigma_1, \Sigma_2$.
If $D_{\alpha}(\mathcal{N}(\mu_1, \Sigma_1)|| \mathcal{N}(\mu_2, \Sigma_2)) \le \ep$, the mechanism $\mathcal{M}_G$ satisfies $(\alpha, \ep)$-RDP.
Here we set
\[
L_1 := {}^t(\mu_1 - \mu_2) \Sigma_{\alpha}^{-1}(\mu_1 - \mu_2), \ \ L_2 := \frac{|\Sigma_{\alpha}|}{|{\Sigma_1}|^{1-\alpha}|{\Sigma_2}|^{\alpha}}.
\]
Then we see
\[
D_{\alpha}(\mathcal{N}(\mu_1, \Sigma_1)|| \mathcal{N}(\mu_2, \Sigma_2)) = \frac{\alpha}{2} L_1 - \frac{1}{2(\alpha-1)} \log L_2.
\]
Thus, an upper bound $\ep$ is described by the maximum of $L_1$ and the minimum of $L_2$.
The outline of proof is as follows.
First, by using the different record, we represent the difference between mean vectors and the difference between covariance matrices (Lemma~\ref{lem:private-diff}).
Next, we determine the positive-definiteness of $T_{\alpha}$ (Lemma~\ref{lem:private-pos}).
Finally, we compute the upper bound of $L_1$ (Lemma~\ref{lem:private-l1-max}) and the lower bound of $L_2$ (Lemma~\ref{lem:private-l2-min}).

Set $\# D_1 = n$ and $\#D_2 = n+s$, where $s=1$ when we "add" a record and $s=-1$ when we "remove" a record.
The common records are denoted by $x_1, \ldots , x_n \in [-1, 1]^d$ and the different record by $x \in [-1,1]^d$.
We set each mean vector as $\mu_1, \mu_2$ and covariance matrix as $\Sigma_1, \Sigma_2$.
We also denote by $\sigma_{min}$ the minimum eigenvalue of $\Sigma_1$.
Note that $\sigma_{min} \ge \sigma$ by the assumption.

\begin{envL}[Representations of difference]\label{lem:private-diff}
The following equations hold:
\[
    \mu_{d} := \mu_2 - \mu_1 = \frac{s}{n+s}x - \frac{s}{n(n+s)}\sum_{i=1}^n x_i, 
    \]
    \[
    X:= \Sigma_2 - \frac{n}{n+s}\Sigma_1 = \frac{ns}{(n+s)^2}(x - \mu_1){}^t (x - \mu_1).
    \]
\end{envL}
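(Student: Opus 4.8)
The plan is to prove both identities by direct computation from the definitions of $\mu$ and $\Sigma$, treating the ``add'' case ($s=1$) and the ``remove'' case ($s=-1$) uniformly. The observation that unifies the two cases is that $D_2$ arises from the common records $x_1,\dots,x_n$ of $D_1$ by inserting or deleting the single record $x$, so that both the record sum and the second-moment matrix of $D_2$ differ from those of $D_1$ by exactly $\pm$ the contribution of $x$. Writing $\mu_1=\frac1n\sum_{i=1}^n x_i$, this means $\mu_2=\frac{1}{n+s}\bigl(\sum_{i=1}^n x_i+sx\bigr)$ and $\sum_{D_2}y\,{}^ty=\sum_{i=1}^n x_i{}^tx_i+s\,x{}^tx$, with $s=\pm1$ carried symbolically throughout.

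For the first identity I would subtract $\mu_1$ from $\mu_2$ over the common denominator $n(n+s)$: the terms involving $\sum_{i=1}^n x_i$ combine into $-\frac{s}{n(n+s)}\sum_{i=1}^n x_i$ and the remaining term is $\frac{s}{n+s}x$, which is precisely $\mu_d$. Substituting $\sum_{i=1}^n x_i=n\mu_1$ then yields the compact form $\mu_d=\frac{s}{n+s}(x-\mu_1)$, which is the form I would carry into the covariance computation.

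For the second identity I would insert $\Sigma=\frac1m\sum y\,{}^ty-\mu\,{}^t\mu$ (with $m$ the respective dataset size) into $X=\Sigma_2-\frac{n}{n+s}\Sigma_1$. The second-moment contributions of the common records cancel, because the $\frac{1}{n+s}\sum_{i=1}^n x_i{}^tx_i$ term from $\Sigma_2$ is exactly matched by the $\frac{n}{n+s}\cdot\frac1n\sum_{i=1}^n x_i{}^tx_i$ term from $\frac{n}{n+s}\Sigma_1$, leaving
\[
X=\frac{s}{n+s}\,x{}^tx-\mu_2{}^t\mu_2+\frac{n}{n+s}\,\mu_1{}^t\mu_1.
\]
Substituting $\mu_2=\mu_1+\frac{s}{n+s}(x-\mu_1)$ and expanding every outer product in terms of $\mu_1$ and $x-\mu_1$, I expect the coefficients of $\mu_1{}^t\mu_1$ to sum to $\frac{s}{n+s}-1+\frac{n}{n+s}=0$ and the mixed terms $\mu_1{}^t(x-\mu_1)$ and $(x-\mu_1){}^t\mu_1$ to cancel likewise, so that only the rank-one term $(x-\mu_1){}^t(x-\mu_1)$ survives, with coefficient $\frac{s}{n+s}-\frac{s^2}{(n+s)^2}=\frac{ns}{(n+s)^2}$. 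This is exactly the claimed expression for $X$.

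The computation is entirely routine, so the only real care is bookkeeping. The two points to watch are keeping $s$ symbolic so that the insertion and deletion cases are handled simultaneously, and correctly distinguishing scalars from $d\times d$ outer products when expanding. The mildly delicate step is the covariance expansion, where one must confirm that everything except the rank-one piece cancels; rewriting $\mu_2$ through $\mu_1$ and $x-\mu_1$ before expanding is what makes those cancellations transparent.
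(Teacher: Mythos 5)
Your proposal is correct, and it is the same direct calculation the paper invokes: the paper's proof of this lemma is literally ``It is easily shown by calculation,'' and your write-up simply carries that calculation out (the cancellation of the second-moment and $\mu_1{}^t\mu_1$ terms and the surviving coefficient $\frac{s}{n+s}-\frac{s^2}{(n+s)^2}=\frac{ns}{(n+s)^2}$ all check out). Keeping $s=\pm1$ symbolic and rewriting everything in terms of $\mu_1$ and $x-\mu_1$ is exactly the right bookkeeping.
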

\begin{proof}
It is easily shown by calculation.
\end{proof}
The rank of $X$ is one.
$X$ is semi-positive definite when $s=1$ and semi-negative definite when $s=-1$.

\begin{envL}[Positive-definiteness of $T_{\alpha}$]\label{lem:private-pos}
If the following two inequalities hold, $T_{\alpha}$ is positive-definite:
    \begin{equation}\label{eqn:positive-cond}
     \frac{n-1}{n} < \tau, \ \     \alpha < \min \left\{ n+1, \ \frac{(n-1)^2 }{\tau n - (n-1)} \right\}.
    \end{equation}
\end{envL}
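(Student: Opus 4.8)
The plan is to reduce positive-definiteness of $T_\alpha = \alpha\Sigma_1^{-1} + (1-\alpha)\Sigma_2^{-1}$ to an explicit spectral condition on the single matrix $\Sigma_1^{-1}\Sigma_2$, and then to read that condition off from the rank-one structure supplied by Lemma~\ref{lem:private-diff}. Since $\Sigma_2$ is positive-definite it has a symmetric positive-definite square root $\Sigma_2^{1/2}$, and conjugating by it gives
\[
\Sigma_2^{1/2} T_\alpha \Sigma_2^{1/2} = \alpha\,\Sigma_2^{1/2}\Sigma_1^{-1}\Sigma_2^{1/2} + (1-\alpha)I.
\]
By Lemma~\ref{lem:b} (congruence by the invertible symmetric matrix $\Sigma_2^{-1/2}$), $T_\alpha$ is positive-definite as soon as this conjugate is. The matrix $\Sigma_2^{1/2}\Sigma_1^{-1}\Sigma_2^{1/2}$ is symmetric and similar to $\Sigma_1^{-1}\Sigma_2$, so its eigenvalues are exactly the (real, positive) eigenvalues $\lambda$ of $\Sigma_1^{-1}\Sigma_2$; because $\alpha>1$, the conjugate is positive-definite precisely when every such $\lambda$ satisfies $\alpha\lambda + (1-\alpha) > 0$, i.e. $\lambda > 1 - \frac1\alpha$.

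Next I would compute these eigenvalues explicitly. By Lemma~\ref{lem:private-diff}, writing $v := x - \mu_1$,
\[
\Sigma_1^{-1}\Sigma_2 = \frac{n}{n+s}I + \frac{ns}{(n+s)^2}\,\Sigma_1^{-1}v\,{}^t v,
\]
which is a rank-one perturbation of a multiple of the identity. The rank-one matrix $\Sigma_1^{-1}v\,{}^t v$ has a single nonzero eigenvalue equal to its trace ${}^t v\,\Sigma_1^{-1}v =: R_0 \ge 0$, so $\Sigma_1^{-1}\Sigma_2$ has the eigenvalue $\frac{n}{n+s}$ with multiplicity $d-1$ and the single eigenvalue $\frac{n}{n+s} + \frac{ns}{(n+s)^2}R_0$. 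The required condition therefore reduces to controlling these two values against $1 - \frac1\alpha$, and the only quantity not already fixed is $R_0$.

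The remaining ingredient is the uniform bound $R_0 \le \tau$. Since $x, \mu_1 \in [-1,1]^d$, each coordinate of $v$ lies in $[-2,2]$, so $\lVert v\rVert^2 \le 4d$; as the largest eigenvalue of $\Sigma_1^{-1}$ is $1/\sigma_{min} \le 1/\sigma$, we get $R_0 = {}^t v\,\Sigma_1^{-1}v \le \lVert v\rVert^2/\sigma \le 4d/\sigma = \tau$. With this I split into the two neighboring directions. For $s=+1$ the perturbation is nonnegative, so the smallest eigenvalue is $\frac{n}{n+1}$, and $\frac{n}{n+1} > 1 - \frac1\alpha$ is equivalent to $\alpha < n+1$. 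For $s=-1$ the perturbation lowers one eigenvalue, and using $R_0 \le \tau$ the smallest eigenvalue is at least $\frac{n}{n-1} - \frac{n\tau}{(n-1)^2}$; requiring this to exceed $1 - \frac1\alpha$ rearranges (under $\tau > \frac{n-1}{n}$, which is exactly what makes the relevant denominator positive) to $\alpha < \frac{(n-1)^2}{\tau n - (n-1)}$. Taking the conjunction of the two directions yields precisely the hypotheses~(\ref{eqn:positive-cond}).

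I expect the one genuinely delicate step to be the first reduction: $T_\alpha$ is a difference of two positive-definite matrices and so is not manifestly definite, and the whole argument hinges on recognizing that congruence by $\Sigma_2^{1/2}$ converts the question into a clean spectral inequality for $\Sigma_1^{-1}\Sigma_2$. Once that reformulation is in place, the rank-one form of $X$ from Lemma~\ref{lem:private-diff} makes the spectrum completely explicit and the rest is the elementary estimate $R_0 \le \tau$ together with the two case computations; the only points needing care are tracking the sign of $1-\alpha$ in the eigenvalue inequality and verifying that $\tau > \frac{n-1}{n}$ is exactly the condition keeping the $s=-1$ bound well-defined.
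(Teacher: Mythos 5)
Your proposal is correct, and every step checks out: the congruence $\Sigma_2^{1/2}T_\alpha\Sigma_2^{1/2}=\alpha\Sigma_2^{1/2}\Sigma_1^{-1}\Sigma_2^{1/2}+(1-\alpha)I$ is legitimate, the spectrum of $\Sigma_1^{-1}\Sigma_2$ is exactly $\frac{n}{n+s}$ with multiplicity $d-1$ together with $\frac{n}{n+s}+\frac{ns}{(n+s)^2}R_0$, the bound $R_0={}^t v\,\Sigma_1^{-1}v\le \lVert v\rVert^2/\sigma_{min}\le \tau$ is the right uniform estimate, and the two threshold inequalities rearrange to precisely the hypotheses~(\ref{eqn:positive-cond}). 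However, your route is genuinely different from the paper's. The paper writes $T_\alpha=\Sigma_1^{-1}\Sigma_\alpha\Sigma_2^{-1}$ (a product of three positive-definite matrices whose product is symmetric) and invokes Proposition~\ref{prop:matrices} to reduce positive-definiteness of $T_\alpha$ to that of $\Sigma_\alpha=(1-\frac{s\alpha}{n+s})\Sigma_1+\alpha X$; it then lower-bounds the quadratic form ${}^t z\Sigma_\alpha z$ over unit vectors $z$ by treating the $\Sigma_1$-term and the rank-one term in the worst case separately. Your approach dispenses with Proposition~\ref{prop:matrices} entirely and instead converts the question into an exact eigenvalue computation for $\Sigma_1^{-1}\Sigma_2$; this buys you a cleaner and arguably sharper argument, since you identify the full spectrum rather than bounding a quadratic form term by term (the paper's claim that the minimum of ${}^t z\Sigma_\alpha z$ is ``obtained when'' $z$ is simultaneously aligned with the minimal eigenvector of $\Sigma_1$ and with $x-\mu_1$ is really only a valid lower bound, not an attained minimum, whereas your spectral statement is exact before the final substitution $R_0\le\tau$). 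The paper's route, in exchange, reuses its Proposition~\ref{prop:matrices} machinery and keeps the object of study ($\Sigma_\alpha$) the same one that reappears in the later bounds on $L_1$ and $L_2$. Both arguments land on identical conditions, so either is acceptable.
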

\begin{proof}
Since $T_{\alpha} = \Sigma_1 \Sigma_{\alpha} \Sigma_2 = \Sigma_2 \Sigma_{\alpha} \Sigma_1$, by Lemma~\ref{prop:matrices}, the positive-definiteness of $T_{\alpha}$ is reduced to the positive-definiteness of $\Sigma_{\alpha}$.
By Lemma~\ref{lem:private-diff}, we have
\[
\Sigma_{\alpha} = (1 - \alpha)\Sigma_1 + \alpha \left( \frac{n}{n+s}\Sigma_1 + X \right) = \left(1 - \frac{s\alpha}{n+s} \right) \Sigma_1 + \alpha X.
\]
When $s=1$, since $\Sigma_1$ is positive-definite and $X$ is semi-positive definite, it is enough to be $\alpha < n+1$.
We consider the case when $s=-1$.
For an arbitrary vector $z \in \R^d$ whose norm is one, we seek a condition where the minimum of ${}^t z \Sigma_{\alpha} z$ is positive.
Here we can consider that the vector $x - \mu_1$ is contained in a ball with a radius $2\sqrt{d}$.
Thus, we obtain the minimum when the following two conditions hold:
\begin{itemize}
    \item $z$ is parallel to the eigenvector of the minimum eigenvalue $\sigma_{min}$ of $\Sigma_1$;
    \item $x-\mu_1$ is parallel to $z$.
\end{itemize}
Hence we see that $\Sigma_{\alpha}$ is positive-definite if 
\begin{eqnarray*}
{}^t z \Sigma_{\alpha} z &=& \left( 1 + \frac{\alpha}{n-1}\right)\sigma_{min} - \alpha \frac{n}{(n-1)^2}4d \\
    &=& \sigma_{min} - \alpha \cdot \frac{4dn - (n-1)\sigma_{min}}{(n-1)^2} \\
    &\ge& \sigma - \alpha \cdot \frac{4dn - (n-1)\sigma}{(n-1)^2} > 0.
\end{eqnarray*}
When the inequalities in Equation~(\ref{eqn:positive-cond}) hold, this inequality also holds.
\end{proof}

\begin{envL}[Upper bound of $L_1$]\label{lem:private-l1-max}
    If $s=1$, then we have
    \[
    L_1 \le \frac{\tau}{(n+1)(n+1-\alpha)},
    \]  
    and if $s=-1$, then we have
    \[
    L_1 \le \frac{\tau}{(n-1)(n-1+\alpha) -\alpha n \tau}.
    \]
\end{envL}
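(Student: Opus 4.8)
The plan is to reduce both bounds to a single scalar maximisation after exploiting the collinearity built into Lemma~\ref{lem:private-diff}. First I would note that since $\mu_1 = \frac1n\sum_{i=1}^n x_i$, the difference of means collapses to $\mu_d = \frac{s}{n+s}(x-\mu_1)$; writing $v := x-\mu_1$, this says $\mu_d$ points in \emph{exactly} the same direction as the rank-one matrix $X = \frac{ns}{(n+s)^2}v\,{}^tv$. Consequently $L_1 = {}^t\mu_d\,\Sigma_{\alpha}^{-1}\mu_d = \frac{1}{(n+s)^2}\,{}^tv\,\Sigma_{\alpha}^{-1}v$, and by Lemma~\ref{lem:private-diff}
\[
\Sigma_{\alpha} = \Bigl(1-\tfrac{s\alpha}{n+s}\Bigr)\Sigma_1 + \alpha X = \beta\Sigma_1 + \frac{\alpha n s}{(n+s)^2}\,v\,{}^tv,\qquad \beta := 1-\tfrac{s\alpha}{n+s},
\]
so the whole quantity depends on $v$ only through the single scalar $a := {}^tv\,\Sigma_1^{-1}v$.

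The key computational step is the Sherman--Morrison formula applied to this rank-one update, which yields
\[
{}^tv\,\Sigma_{\alpha}^{-1}v = \frac{a}{\beta + \frac{\alpha n s}{(n+s)^2}\,a},\qquad\text{hence}\qquad L_1 = \frac{1}{(n+s)^2}\cdot\frac{a}{\beta + \frac{\alpha n s}{(n+s)^2}\,a}.
\]
Under the standing hypothesis $\alpha<n+1$ one has $\beta>0$ in both sign cases, so a one-line derivative check shows the scalar map $a\mapsto a/(\beta+ca)$ is increasing on its domain, regardless of the sign of the constant $c=\frac{\alpha n s}{(n+s)^2}$. To close the argument I would bound $a$: since $x,\mu_1\in[-1,1]^d$ we have $\|v\|^2\le 4d$, and since every eigenvalue of $\Sigma_1$ is at least $\sigma_{min}\ge\sigma$, we get $a \le \|v\|^2/\sigma_{min}\le 4d/\sigma = \tau$.

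The two cases then differ only in how the monotone scalar function is estimated. For $s=1$ the constant $c$ is positive, so I would simply drop it from the denominator and then insert $a\le\tau$, obtaining $L_1\le \frac{1}{(n+1)^2}\cdot\frac{\tau}{1-\frac{\alpha}{n+1}} = \frac{\tau}{(n+1)(n+1-\alpha)}$. For $s=-1$ the constant $c$ is negative, so dropping it is not permitted; instead monotonicity forces the maximum at $a=\tau$, and substituting $a=\tau$ and clearing the factor $(n-1)^{-2}$ gives $L_1\le \frac{\tau}{(n-1)(n-1+\alpha)-\alpha n\tau}$. The main obstacle I anticipate is precisely this $s=-1$ case: the denominator $\beta+c\tau$ is diminished by a negative term, so the bound is finite and positive only when $(n-1)(n-1+\alpha)-\alpha n\tau>0$. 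I would therefore have to verify that this is guaranteed by the positivity condition of Lemma~\ref{lem:private-pos} (the same inequality $\alpha<\frac{(n-1)^2}{\tau n-(n-1)}$), and that $a=\tau$ stays inside the region where $\Sigma_{\alpha}$ is positive-definite, so that Sherman--Morrison and the closed form for $L_1$ remain valid throughout.
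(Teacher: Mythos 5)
Your proof is correct, and it reaches both stated bounds, but by a genuinely different route from the paper. The paper bounds $L_1$ by the Rayleigh-quotient estimate $L_1 \le \|\mu_d\|^2/\lambda_{\min}(\Sigma_{\alpha})$ and then controls the two factors separately: $\|\mu_d\|^2 \le 4d/(n+s)^2$, and a lower bound on $\min_{\|z\|=1}{}^t z\Sigma_{\alpha}z$ obtained by taking $z$ along the minimal eigenvector of $\Sigma_1$ and letting $x-\mu_1$ be parallel to it with norm $2\sqrt{d}$. You instead exploit the exact collinearity $\mu_d = \frac{s}{n+s}v$ with the rank-one direction of $X$, apply Sherman--Morrison to the rank-one update $\Sigma_{\alpha}=\beta\Sigma_1+c\,v\,{}^tv$, and obtain the closed form $L_1=\frac{1}{(n+s)^2}\cdot\frac{a}{\beta+ca}$ with $a={}^tv\Sigma_1^{-1}v$, reducing everything to a monotone scalar function of $a\le\tau$. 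This is exact until the very last step, and it makes transparent why the worst case is a single configuration ($v$ aligned with the minimal eigenvector of $\Sigma_1$ at maximal norm); the paper's decoupled estimate could in principle be looser, but here the two extremal configurations coincide, so the final constants agree in both cases. You also correctly flag the two validity issues the paper handles via Lemma~\ref{lem:private-pos}: that $\beta>0$ requires $\alpha<n+1$, and that for $s=-1$ the denominator $(n-1)(n-1+\alpha)-\alpha n\tau$ is positive exactly under the condition $\alpha<\frac{(n-1)^2}{\tau n-(n-1)}$, which also keeps $\beta+ca>0$ on the whole range $a\in(0,\tau]$ so that the Sherman--Morrison formula applies throughout. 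No gaps.
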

\begin{proof}
Now $\mu_d$ is contained in a ball with a radius $\frac{2\sqrt{d}}{n+s}$ by Lemma~\ref{lem:private-diff} and $\Sigma_{\alpha}$ is positive-definite by Lemma~\ref{lem:private-pos}.
By multiplying the reciprocal of the minimum of ${}^t z \Sigma_{\alpha} z$ for a unit vector $z \in \R^d$ by $\frac{4d}{(n+s)^2}$, we can obtain the maximum of ${}^t \mu_d \Sigma_{\alpha}^{-1} \mu_d$.
Here, we see 
\[
{}^t z \Sigma_{\alpha} z = {}^t z \left(1 - \frac{s \alpha}{n+s}\right) \Sigma_1 z + \frac{s \alpha n}{(n+s)^2} ( {}^t z (x-\mu_1))^2.
\]
Hence when $s=1$, the minimum is 
\[
\left(1 - \frac{\alpha}{n+1}\right) \sigma_{min}.
\]
When $s=-1$, since $x - \mu_1$ is contained in a ball with a radius $2\sqrt{d}$, the minimum is 
\[
\left(1+ \frac{\alpha}{n-1}\right)\sigma_{min} - \frac{\alpha n}{(n-1)^2} \cdot 4d.
\]
Thus, we obtain the inequality.
\end{proof}

\begin{envL}[Lower bound of $L_2$]\label{lem:private-l2-min}
It holds 
\[
L_2 \ge \frac{(1 - \frac{s\alpha}{n+s})^d}{(\frac{n}{n+s})^{\alpha d}} \cdot \min \left\{ 1, \frac{1 + \frac{\alpha ns \tau}{(n + s - s \alpha)(n+s)} }{(1 + \frac{s \tau}{n+s})^{\alpha}} \right\}.
\]
\end{envL}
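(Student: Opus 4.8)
The plan is to reduce the determinant ratio $L_2$ to a one-variable expression and then carry out an elementary scalar optimization. Set $w := x - \mu_1$ and let $Q := {}^t w\,\Sigma_1^{-1} w \ge 0$ be the associated quadratic form. Using Lemma~\ref{lem:private-diff} I first rewrite
\[
\Sigma_{\alpha} = (1-\alpha)\Sigma_1 + \alpha\Sigma_2 = a\,\Sigma_1 + \alpha X, \qquad \Sigma_2 = b\,\Sigma_1 + X,
\]
where $a := 1 - \frac{s\alpha}{n+s}$, $b := \frac{n}{n+s}$, and $X = \frac{ns}{(n+s)^2}\,w\,{}^t w$ is the rank-one matrix identified there. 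Both $\Sigma_{\alpha}$ and $\Sigma_2$ are thus rank-one perturbations of a scalar multiple of $\Sigma_1$, which is exactly the structure the proof exploits.

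Next I would apply the matrix-determinant lemma to each factor. Since $X$ is rank one, $|\Sigma_{\alpha}| = a^d|\Sigma_1|\bigl(1 + A Q\bigr)$ and $|\Sigma_2| = b^d|\Sigma_1|\bigl(1 + B Q\bigr)$, where a short computation gives $A = \frac{\alpha ns}{(n+s-s\alpha)(n+s)}$ and $B = \frac{s}{n+s}$. Substituting into $L_2 = |\Sigma_{\alpha}|\,|\Sigma_1|^{\alpha-1}|\Sigma_2|^{-\alpha}$, the powers of $|\Sigma_1|$ cancel completely (the exponents sum to $1 - (1-\alpha) - \alpha = 0$), leaving
\[
L_2 = \frac{a^d}{b^{\alpha d}}\cdot f(Q), \qquad f(Q) := \frac{1 + A Q}{(1 + B Q)^{\alpha}}.
\]
The prefactor $a^d/b^{\alpha d}$ already matches $\frac{(1-\frac{s\alpha}{n+s})^d}{(\frac{n}{n+s})^{\alpha d}}$ in the claim. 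To control $Q$ I would use $\|w\| \le 2\sqrt d$ (both $x$ and $\mu_1$ lie in $[-1,1]^d$) together with $\sigma_{min} \ge \sigma$, giving $Q \le \frac{4d}{\sigma} = \tau$; hence it suffices to bound $f$ on the interval $[0,\tau]$.

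The remaining, and main, step is the scalar inequality $f(Q) \ge \min\{f(0), f(\tau)\} = \min\{1, f(\tau)\}$ for $Q \in [0,\tau]$, which yields precisely the stated bound since $f(0)=1$ and $f(\tau)$ equals the second term of the minimum. I expect this to be the only delicate point. The idea is to show $f$ is unimodal: differentiating, $f'(Q)$ has the same sign as the affine function $g(Q) := A(1+BQ) - \alpha B(1+AQ) = (A-\alpha B) + AB(1-\alpha)Q$. A direct computation gives $g(0) = A - \alpha B = \frac{\alpha(\alpha-1)}{(n+s)(n+s-s\alpha)} > 0$ and $AB = \frac{\alpha n}{(n+s-s\alpha)(n+s)^2} > 0$, both using that Equation~(\ref{eqn:thm-cond}) (via Lemma~\ref{lem:private-pos}) keeps $n+s-s\alpha > 0$; since $\alpha>1$ the slope $AB(1-\alpha)$ is negative. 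Thus $g$ starts positive and is decreasing, so it changes sign at most once from $+$ to $-$, meaning $f$ first increases and then (possibly) decreases. A unimodal function attains its minimum over $[0,\tau]$ at an endpoint, which gives $f(Q)\ge\min\{1,f(\tau)\}$ and completes the argument. The same positivity conditions guarantee $1+BQ>0$ and $1+AQ>0$ throughout $[0,\tau]$, so $f$ is well defined and positive and no sign subtleties arise when $s=-1$.
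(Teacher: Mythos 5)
Your proposal is correct and follows essentially the same route as the paper: both reduce $L_2$ to a prefactor times a one-variable function of the size of the rank-one perturbation (your $Q={}^t w\,\Sigma_1^{-1}w$ is just a rescaling of the paper's nonzero eigenvalue $\lambda=\frac{ns}{(n+s)^2}Q$ of $\Sigma_1^{-1}X$), prove unimodality by a sign analysis of the derivative, and take the minimum at the endpoints $Q\in\{0,\tau\}$. Your use of the matrix determinant lemma and the observation that positivity of the denominators for $s=-1$ is exactly the condition from Lemma~\ref{lem:private-pos} are slightly more explicit than the paper's presentation, but the argument is the same.
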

\begin{proof}
We see that
\[
L_2 := \frac{|(1 - \frac{s\alpha}{n+s}) \Sigma_1 + \alpha X|}{|{\Sigma_1}|^{1-\alpha}|{\frac{n}{n+s} \Sigma_1 + X}|^{\alpha}} = \frac{(1 - \frac{s\alpha}{n+s})^d |I + \frac{n+s}{n + s - s \alpha} \alpha \Sigma_1^{-1} X|}{(\frac{n}{n+s})^{\alpha d}|{ I + \frac{n+s}{n} \Sigma_1^{-1}X}|^{\alpha}}.
\]
Since the rank of $X$ is one and $\Sigma_1^{-1}$ is invertible, the rank of $\Sigma_1^{-1}X$ is also one.
Thus, there is only one non-zero eigenvalue, and it is set as $\lambda$.
We also set $A := (1 - \frac{s\alpha}{n+s})^d/(\frac{n}{n+s})^{\alpha d}$.
Since the other eigenvalues are all zero, we see 
\[
L_2 = \frac{1 + \frac{n+s}{n + s - s \alpha} \alpha \lambda}{(1 + \frac{n+s}{n} \lambda)^{\alpha}} \cdot A.
\]
By differentiating this equation with respect to $\lambda$, we obtain 
\[
\frac{\p L_2}{\p \lambda} = \alpha(\alpha -1)\frac{n+s}{n(n+s - s \alpha)}\cdot\frac{s - (n+s)\lambda}{(1 + \frac{n+s}{n}\lambda)^{\alpha+1}} \cdot A.
\]
We see that $\frac{\p L_2}{\p \lambda} > 0$ when $\frac{s}{n+s} < \lambda$ and $\frac{\p L_2}{\p \lambda} < 0$ when $\frac{s}{n+s} > \lambda$.
Hence the minimum of $L_2$ is obtained at the edges of the range of $\lambda$.

Next, we will find the range of $\lambda$, which is the only one non-zero eigenvalue of $\Sigma_1^{-1} X$.
Since $\Sigma_1$ is positive-definite, we can obtain the spectral decomposition of $\Sigma_1$:
\[
\Sigma_1 = \sum_{i=1}^d \sigma_i p_i {}^t p_i, 
\]
where $\sigma_1, \ldots , \sigma_d$ are the eigenvalues of $\Sigma_1$ and $p_1, \ldots, p_d$ are their eigenvectors whose norms are one.
Since $p_1, \ldots, p_d$ is a basis of $\R^d$, there exist $r_1, \ldots , r_d \in \R$ such that 
\[
x - \mu_1 = \sum_{i=1}^d r_i p_i.
\]
Squaring both sides, we obtain a condition $4d \ge \sum_{i=1}^d r_i^2 > 0$.
Set $e_1 := \sum_{i=1}^d \frac{r_i}{\sigma_i} p_i$.
Then we have
\begin{eqnarray*}
\Sigma_1^{-1} X e_1 &=& \Sigma_1^{-1} \frac{ns}{(n+s)^2} \sum_{i=1}^d r_i p_i ((x - \mu_1) \cdot e_1)\\
&=& \frac{ns}{(n+s)^2} ((x - \mu_1) \cdot e_1) e_1\\
&=& \frac{ns}{(n+s)^2} (\sum_{i=1}^d \frac{r_i^2}{\sigma_i}) e_1.
\end{eqnarray*}
Thus, we have $\lambda = \frac{ns}{(n+s)^2} \sum_{i=1}^d \frac{r_i^2}{\sigma_i}$.
Therefore, we have $0 < \lambda \le \frac{4dn}{(n+1)^2 \sigma_{min}} \le \frac{4dn}{(n+1)^2 \sigma}$ when $s=1$, and $-\frac{4dn}{(n-1)^2 \sigma}  \le -\frac{4dn}{(n-1)^2 \sigma_{min}} \le \lambda <0$ when $s=-1$.
\end{proof}

\section{Numerical Evaluations}

In Theorem~\ref{thm:main-private} and Corollary~\ref{cor:main-public}, we obtain the concrete upper bounds.
Thus, in this section, we compute the value $\ep$ concretely and observe the results.

\subsection{Setting of Numerical Parameters}
We set $d=6$, $\sigma =0.01$ since the number of numerical attributions in Adult Dataset~\cite{Dua:2019} is six and the minimum eigenvalue for the data normalized into $[-1,1]$ is $\sigma_{min}=0.01$.

\subsection{Relation between $\alpha$ and $\ep$}

\begin{figure}[t]
\begin{center}
\includegraphics[width=0.8\linewidth]{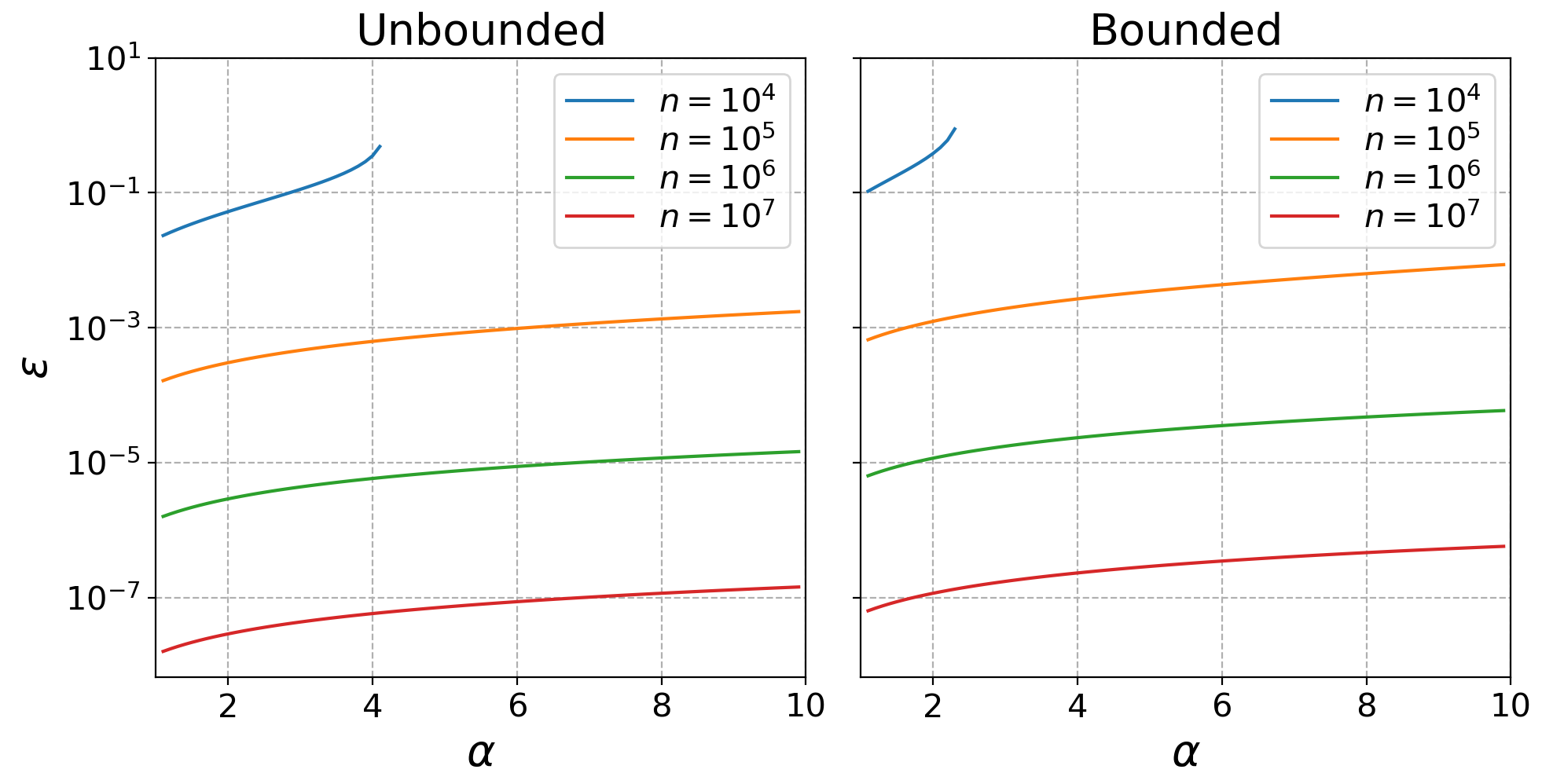}
\caption{$\alpha$-$\ep$ curve ($d=6$,  $\sigma=0.01$) : Vertical axis is logarithmic scale. The curves are drawn for each of the four sample sizes $n$.}
\label{pic:alpha-ep-curve}
\end{center}
\end{figure}

The relations between $\alpha$ and $\ep$ are shown in Fig.~\ref{pic:alpha-ep-curve}  ($\alpha$-$\ep$ curves).
For all curves, $\ep$ is monotonically increasing with respect to $\alpha$.
We also see that as $n$ increases exponentially, $\ep$ becomes smaller at equal intervals on a logarithmic scale.
In particular, if $n=10^4$, the condition in Equation~(\ref{eqn:thm-cond}) is 
\[
\alpha <  c:=  \min \left\{ n+1, \frac{n^2}{\tau(n+1)-n} \right\} \fallingdotseq  4.1679
\]
and the condition in Equation~(\ref{eqn:cor-cond}) is 
\[
\alpha < \frac{c^2}{2c-1} \fallingdotseq 2.3680 .
\]
Thus, the curves stop at these values.

\subsection{The Case Input and Output are the Same Sizes}

For $\alpha=4$, the values of $\ep$ for which the mechanism $\mathcal{M}_G^n$ satisfies $(\alpha, \ep)$-RDP are shown in Table~\ref{tab:n-table}.
By the composition theorem in Proposition~\ref{prop:composition}, the values of $\ep$ are ones in Theorem~\ref{thm:main-private} and Corollary~\ref{cor:main-public} multiplied by $n$.
We can show that values of $\ep$ are within a practical range when $n \ge 10^6$ under both conditions.
In particular, under the unbounded condition, $\ep = 0.5764$ when $n=10^7$, which is very small.
We also see that $\ep$'s under the unbounded condition are four times larger than those under the bounded condition.

\begin{table}[t]
\caption{Values of $\ep$ in the case that input and output are the same size $n$. ($\alpha=4, d=6, \sigma = 0.01$)}\label{tab:n-table}
\hbox to\hsize{\hfil
\begin{tabular}{l|l|l|l|l}\hline
\ $n$ &\  $10^4$ & \ $10^5$ & \ $10^6$ & \ $10^7$ \\ \hline \hline
\ Unbounded $\ep$ \  & \ 3535.17 \ & \ 62.5859 \ & \ {\bf 5.8064} \ & \ {\bf 0.5764} \ \\ \hline
\ Bounded $\ep$\  & \ - \ & \ 266.7349 \ & \ 23.3577 \ & \ {\bf 2.3071} \ \\ \hline
\end{tabular}
\hfil}
\end{table}

\subsection{Translation into $(\ep, \delta)$-DP}

\begin{table}[t]
\caption{Values of $\ep$ in $(\ep, \delta)$-DP with under {\bf unbounded condition} and {\bf bounded condition} ($d=6, \sigma = 0.01$)}\label{tab:epdelta-table}
\hbox to\hsize{\hfil
\begin{tabular}{l|l|l|l|l|l||l|l|l|l|l}
\multicolumn{6}{l}{Case : $n=10^6$} \\ \hline
& \multicolumn{5}{c||}{Unbounded condition} & \multicolumn{5}{c}{Bounded condition} \\ \hline
$\delta$ & $10^{-2}$ & $10^{-5}$ & $10^{-10}$ & $10^{-15}$ & $10^{-20}$ & $10^{-2}$ & $10^{-5}$ & $10^{-10}$ & $10^{-15}$ & $10^{-20}$ \\ \hline \hline
$\alpha = 2$ & 7.499 & 14.407 & 25.920 & 37.433 & 48.946 & {\bf 16.209} & {\bf 23.116} & 34.629 & 46.142 & 57.655 \\
$\alpha = 4$ & {\bf 7.341} & {\bf 9.644} & {\bf 13.482} & 17.319 & 21.157 & 24.893 & 27.195 & {\bf 31.033} & {\bf 34.871} & {\bf 38.708} \\
$\alpha = 7$ & 10.978 & 12.130 & 14.048 & {\bf 15.967} & {\bf 17.886} & 42.046 & 43.198 & 45.116 & 47.035 & 48.954 \\
$\alpha = 10$ & 15.170 & 15.937 & 17.217 & 18.496 & 19.775 & 60.070 & 60.838 & 62.117 & 63.396 & 64.675 \\
$\alpha = 20$ & 30.046 & 30.410 & 31.016 & 31.622 & 32.228 & 123.482 & 123.846 & 124.452 & 125.058 & 125.663 \\
$\alpha = 30$ & 45.624 & 45.863 & 46.260 & 46.657 & 47.054 & 191.710 & 191.948 & 192.345 & 192.742 & 193.139 \\
\hline
\multicolumn{6}{l}{} \\
\multicolumn{6}{l}{Case : $n=10^7$} \\ \hline
& \multicolumn{5}{c||}{Unbounded condition} & \multicolumn{5}{c}{Bounded condition} \\ \hline
$\delta$ & $10^{-2}$ & $10^{-5}$ & $10^{-10}$ & $10^{-15}$ & $10^{-20}$ & $10^{-2}$ & $10^{-5}$ & $10^{-10}$ & $10^{-15}$ & $10^{-20}$ \\ \hline \hline
$\alpha = 2$ & 4.893 & 11.801 & 23.314 & 34.827 & 46.340 & 5.758 & 12.666 & 24.179 & 35.692 & 47.205 \\
$\alpha = 4$ & 2.112 & 4.414 & 8.252 & 12.089 & 15.927 & {\bf 3.842} & 6.145 & 9.982 & 13.820 & 17.658 \\
$\alpha = 7$ & {\bf 1.777} & 2.928 & 4.847 & 6.766 & 8.685 & 4.809 & {\bf 5.960} & {\bf 7.879} & 9.798 & 11.717 \\
$\alpha = 10$ & 1.954 & {\bf 2.722} & {\bf 4.001} & 5.280 & 6.559 & 6.291 & 7.058 & 8.337 & {\bf 9.617} & {\bf 10.896} \\
$\alpha = 20$ & 3.132 & 3.496 & 4.102 & {\bf 4.708} & {\bf 5.313} & 11.838 & 12.201 & 12.807 & 13.413 & 14.019 \\
$\alpha = 30$ & 4.500 & 4.739 & 5.136 & 5.533 & 5.930 & 17.608 & 17.846 & 18.243 & 18.640 & 19.037 \\
\hline
\end{tabular}
\hfil}
\end{table}

By Proposition~\ref{prop:rdp-dp}, we see that $(\alpha, \ep)$-RDP can be translated into $(\ep, \delta)$-DP.

The values translated into $(\ep, \delta)$-DP under the unbounded condition are shown in Table~\ref{tab:epdelta-table}.
When $\delta=0.01$, we see that $\ep=7.341$ for $n=10^6$ and $\ep=1.777$ for $n=10^7$.
When $\delta=10^{-10}$, we also see that $\ep=13.482$ for $n=10^6$ and $\ep=4.001$ for $n=10^7$.
These values are reasonable~\cite{apple,uscensus}.

The results under the bounded condition are shown in Table~\ref{tab:epdelta-table}.
When $\delta=0.01$, we see that $\ep=16.209$ for $n=10^6$ and $\ep=3.842$ for $n=10^7$.
When $\delta=10^{-10}$, we also see that $\ep=31.033$ for $n=10^6$ and $\ep=7.879$ for $n=10^7$.

The values of $\ep$ under the bounded condition are about twice as large as those under the unbounded condition.

\subsection{Summary of Results}

To sum up the results of numerical evaluations, we see the following:
\begin{itemize}
    \item We see that $\ep$ is monotonically increasing with respect to $\alpha$. This result is intuitive.
    \item If $n$ increases exponentially, the curve becomes smaller at equal intervals on a logarithmic scale.
    \item When $n=10^4$, a range where $\alpha$ satisfies the assumption of being very narrow. When $n=10^7$, the value of $\ep$ is practical.
\end{itemize}








\section{Related Work}

In this section, we describe the related work and mention the difference from our result.

\subsection{Differentially Private Synthetic Data Generation}

In synthetic data generation, the post-processing property of differential privacy guarantees that synthetic data generated from differentially private generative parameters also satisfy differential privacy as shown in Fig.~\ref{pic:generative_intro}(b).
Methods to generage differentially private synthetic data for tabular data are classified to two types.

The first type is also called a ``select-measure-generate" scheme~\cite{mckenna2021winning}.
Statistics and (conditional) probability distributions are used as the generative parameters.
Typical statistics are mean vectors and covariance matrices of original datasets.
In particular, synthetic data generation with copulas has been researched actively~\cite{sklar1959fonctions,li2014dpsynthesizer,asghar2020differentially,gambs2021growing}.
To learn conditional distributions, graphical models such as Bayesian networks have been applied to synthetic data generation~\cite{zhang2017priv,zhang2021privsyn,mckenna2022aim,mckenna2019graphical}.

In the second type, generative models with deep neural networks are used to generate synthetic data.
The model parameters trained with the original data are regarded as the generative parameters.
By training deep neural networks with differentially private stochastic gradient descent (DP-SGD)~\cite{abadi2016deep}, we obtain differentially private model parameters.
Methods based on generative adversarial networks (GAN) such as CTGAN~\cite{xu2019modeling}, DPCTGAN~\cite{fang2022dp}, CTAB-GAN~\cite{zhao2021ctab}, and  CTAB-GAN+~\cite{zhao2022ctabplus}, are widely used.
A method based on diffusion model such as TabDDPM~\cite{kotelnikov2022tabddpm} has also attracted attention recently.

In both types of approaches, generative parameters are computed by various differentially private mechanisms~\cite{abadi2016deep,mcsherry2007mechanism} (Fig.~\ref{pic:generative_intro}(b)).
In contrast, we evaluate differential privacy of randomness in data generation when using non-differentially private generative parameters.

\subsection{Privacy Attacks against Synthetic Data Generation}

Many methods empirically evaluate the privacy protection of synthetic data generations from attack success rates of membership inference attacks~\cite{shokri2017membership} and attribute inference attacks~\cite{fredrikson2014privacy}.
Most of them assume that an adversary has access to the target trained model such as GAN~\cite{chen2020gan,Hayes2019logan,hu2021tablegan} and diffusion models~\cite{carlini2023extracting,hu2023membership,duan2023are,matsumoto2023membership}.

On the other hands, there are several methods where an adversary only has access to output synthetic data.
Stadler et al.~\cite{stadler2022synthetic} discussed membership inference attacks and attribute inference attacks for tabular data in such setting, and Oprisanu et al.~\cite{oprisanu2021utility} applied such attacks to genomic data.
Annamalai et al.~\cite{annamalai2023linear} conducted attribute inference with linear reconstruction in this setting.

Although these studies and ours share a common perspective in that they focus on the privacy protection of generated synthetic data alone, these studies differ from ours in that they experimentally evaluate synthetic data generation from an attack perspective.
In contrast, our perspective is to prove R\'{e}nyi differential privacy theoretically.

\subsection{Differential Privacy of Randomness in Synthetic Data Generation}

To the best of our knowledge, only Lin et al.~\cite{lin2021privacy} have evaluated the privacy protection by the randomness in outputs of synthetic data generations.
They theoretically evaluated probabilistic differential privacy~\cite{meiser2018approximate} of GAN-sampled data.
However, the concretely evaluated bound is hard to compute since it needs a GAN's generalization error.
In addition, they assume that training datasets are far larger than the number of model parameters. 
Thus, their main contribution is to give the theoretical bound, but we cannot compute the bound as a concrete numerical value.

In contrast, although we focus on only a simple synthetic data generation, we give the concretely computable bound.

\section{Conclusion}\label{sec:further}

In this paper, we evaluated the privacy protection due to the randomness of synthetic data generation without adding intentional randomness.
We proved R\'{e}nyi differential privacy of a synthetic data generation with a mean vector and covariance matrix (Theorem~\ref{thm:main-private}, Corollary~\ref{cor:main-public}).
We also conducted numerical evaluations using the Adult dataset as a model case.
Concretely, we demonstrated that the mechanism $\mathcal{M}_G^n$ satisfies $(4, 0.576)$-RDP under the unbounded condition and $(4,2.307)$-RDP under the bounded condition (Table~\ref{tab:n-table}).
If they are translated into $(\ep,\delta)$-DP, $\mathcal{M}_G^n$ satisfies $(\ep,\delta)$-DP for a practical $\ep$ (Table~\ref{tab:epdelta-table}).
In future work, we will apply our evaluation method to more advanced synthetic data generation algorithms.

\bibliographystyle{plain}
\bibliography{main}

\end{document}